\newtheorem{protocol}{Protocol}
\newcommand{\Accept}{\mathsf{Accept}}
\newcommand{\Reject}{\mathsf{Reject}}
\newcommand{\Pauli}{\mathcal P}
\newcommand{\Hilbert}{\mathcal H}
\newcommand{\Clifford}{\mathcal C}
\newcommand{\sgn}{\mathrm{sgn}}
\newcommand{\np}{m}
\newcommand{\Niid}{N_{\mathrm{iid}}}
\newcommand{\Nnoniid}{N_{\mathrm{non\text{-}iid}}}
\newcommand{\CEPS}{$\mathcal C$PS}
\newcommand{\mCEPS}{\mathcal{C}\text{PS}}
\newcommand{\STAB}{STAB}
\newcommand{\GS}{GS}
\newcommand{\HGS}{HGS}
\newcommand{\pdist}{\mathcal D} %
\newcommand{\jdist}{\mathcal D} %
\newcommand{\dist}{\mathcal G} %
\newcommand{\copyindex}{j}
\newcommand{\qubit}{i}
\renewcommand{\O}{\mathcal O}
\newcommand{\ketbra}[1]{\ket{#1}\!\bra{#1}}
\newcommand{\abs}[1]{\left|#1\right|}
\title{Efficient certification of intractable quantum states with few Pauli measurements}
\author{Sami Abdul Sater}
\author{Maxime Garnier}
\author{Thierry Martinez}
\author{Harold Ollivier}
\author{Ulysse Chabaud}
\affiliation{INRIA, DIENS, Ecole Normale Supérieure, PSL University, CNRS, 45 rue d’Ulm, Paris 75005, France}
\begin{document}

\begin{abstract}
Efficient verification of quantum computational resources is crucial as experiments advance toward fault-tolerance. Universal quantum computation can be achieved by consuming resource states through simple Pauli measurements, yet a significant gap remains between states that are easy to certify and those required for universality. We focus on \emph{Clifford-enhanced Product States}, a class of resource states obtained by applying Clifford circuits to a product of single-qubit, potentially magic, states. While essential for universal computation, the certification of such states has previously relied on query oracles that are \#P-hard to implement, leaving their efficient, oracle-free verification an open challenge.

In this work, we demonstrate that such classically intractable resource states can be efficiently verified using only Pauli measurements. 
Our protocol achieves sample- and time-efficiency in both i.i.d.\ and adversarial settings.
This work fills a gap in Pauli-based certification, providing a new practical pathway to verify resource states that drive universal Pauli-based quantum computation.
\end{abstract}

\maketitle

\section{Introduction}

Quantum computers' promise and potential to outperform classical ones for certain computational tasks
\cite{S94algorithms, G96fast, F82simulating, F86quantum, D85quantum, HCMP25vast}
has generated rapid progress in the field \cite{P18quantum, EP25mind}.
Increasingly complex devices and experiments are being realized across various platforms
\cite{NLDC22beating, AAAA25constructive, DBMA25breaking}.
However, these developments must be accompanied by the design of reliable techniques to
{certify the outputs of quantum experiments} and {verify the correctness of quantum computations}
\cite{EHWR20quantum, BPY25quantum, PYBB24benchmarking}, especially in the context of delegated quantum computing,
where quantum computation is envisioned as a cloud service provided by a potentially untrusted party
\cite{GKK19verification}.

Verification of quantum computations and certification of quantum states are closely related tasks, particularly in computational models where the computation proceeds by first preparing a specific resource state and then consuming it through a sequence of adaptive measurements. This is the paradigm of measurement-based quantum computation (MBQC). Different MBQC models are distinguished by the class of resource states they use and by the measurements required to achieve universal quantum computation.

A paradigmatic example is graph-state MBQC, where the resource state is a graph state
\cite{RB01one}. This model is universal when Pauli measurements are supplemented with single-qubit measurements in the $XY$ plane, for instance at angles such as $\pi/4$ \cite{DKP07measurement}. Thus, although graph states themselves are highly structured and well suited to certification, universal computation in this model ultimately requires non-Pauli measurements. Hypergraph-based MBQC offers a different route: by using hypergraph states as resources, universal quantum computation can be achieved using Pauli measurements only \cite{TMH18quantum}. However, the preparation of such resource states generally requires non-Clifford, multi-qubit controlled operations, which are far from the standard architectures pursued in most fault-tolerant roadmaps.

Indeed, many approaches to fault-tolerant quantum computation are instead organized around Clifford operations supplemented by non-stabilizer resources, such as $T$ gates or prepared magic states
\cite{DBMA25breaking, G23cleaner, AAAA22suppressing, MGCM25ibm, GSJ24magic}.
This motivates a third perspective: achieving universal quantum computation using only Pauli measurements, provided the resource state is generated by applying Clifford circuits to input states that may include single-qubit magic states. This is the idea behind Pauli-based MBQC \cite{SDKO07direct}. In this setting, the measurements remain experimentally simple, while the non-Clifford computational power is supplied by the initial magic resources.

This perspective shifts the verification problem toward state certification. Indeed, if a computation is driven by consuming a resource state, then certifying the produced state is a natural prerequisite for trusting the computation \cite{FFS26}. A natural framework for this task is the receive-and-measure setting, where a verifier receives quantum systems prepared by a prover and performs local measurements in order to certify their closeness to a desired target state. In this work, we focus on the particularly minimal case where the verifier is restricted to single-qubit Pauli measurements. This restriction still allows one to access arbitrary Pauli observables by measuring the relevant qubits individually and classically combining the outcomes. Such measurements are native to most qubit-based architectures and impose minimal experimental requirements.

For graph states and hypergraph states, efficient Pauli-based certification protocols are known
\cite{TM17verification, TMMM18resource, ZH19efficient, ZH18efficient, LZH23robust}.
Their efficiency relies on the stabilizer, or more generally quasi-stabilizer, structure of these states: certain Pauli measurements have deterministic outcomes on the target state, allowing the verifier to construct simple tests whose acceptance probability directly witnesses fidelity. However, the same reasoning does not apply to the resource states naturally appearing in Pauli-based MBQC with magic inputs. The presence of non-stabilizer single-qubit states destroys the deterministic Pauli correlations that make stabilizer-based certification protocols effective.
 
In this work, we focus on the general class of states obtained by applying an arbitrary Clifford circuit to an arbitrary product of single-qubit states:
\begin{equation}
    \label{eq:intro:CliffordProductStates}
    \ket{\Psi} = C\!\left(\bigotimes_{i=1}^n \ket{\psi_i}\right),
\end{equation}
where $C$ is an $n$-qubit Clifford circuit and the states $\ket{\psi_i}$ are arbitrary single-qubit states, potentially including magic states.
We refer to this family as \emph{Clifford-enhanced Product States} (\CEPS, inspired by \cite{LHD24quantum}), depicted in Fig.~\ref{fig:CPS}. This class naturally captures the resource states arising from Clifford processing of product inputs, and includes the kind of magic-state resources needed for Pauli-based universal quantum computation.
In particular, suitable choices of the input states, including magic states at appropriate locations, yield universal resource states for Pauli-based MBQC \cite{SDKO07direct} while remaining within the \CEPS~family.

\begin{figure}[h]
    \centering
    \includegraphics[width=0.8\linewidth]{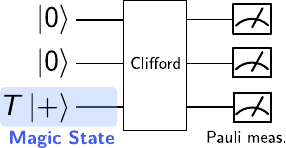}
    \caption{
    \CEPS~and their use for computation. Adaptive Pauli measurements enable universal quantum computing.}
    \label{fig:CPS}
\end{figure}

Despite their relevance, existing certification methods do not provide an efficient Pauli-based protocol for arbitrary \CEPS. First, \CEPS~are not, in general, stabilizer states, nor do they exhibit the stabilizer-like structure that enables efficient protocols for graph and hypergraph states \cite{TM17verification, TMMM18resource, ZH19efficient, ZH18efficient, LZH23robust}. Second, more general certification protocols based on single-qubit measurements have been proposed \cite{GHO25few}, with even a Pauli-only protocol in \cite{HPS24certifying} for a broad class of states,  but they assume oracle access to quantities such as Pauli amplitudes of the target state. For \CEPS, this assumption is not computationally realizable: computing amplitudes of quantum circuit output states is $\#P$-hard in the worst case \cite{A04quantuma, BFNV18quantum}, and in particular for \CEPS~\cite{YJS18quantum}. Consequently, while such protocols may be sample-efficient, they are not time-efficient for \CEPS~unless $P=P^{\#P}$. Third, \CEPS~are related to the broader framework of $t$-doped stabilizer states~\cite{LOH24Learning}, but known efficient certification methods for such states are limited to regimes where the amount of non-stabilizerness scales logarithmically with the system size~\cite{LOLH24Learning}. This does not cover arbitrary \CEPS, nor the universal Pauli-MBQC resource states contained in \CEPS, where magic states may appear at extensively many input locations.
These relationships between \CEPS~and other known classes of states are summarized in Fig.~\ref{fig:diagram}.

\begin{figure}[t]
    \centering
    \includegraphics[width=\linewidth]{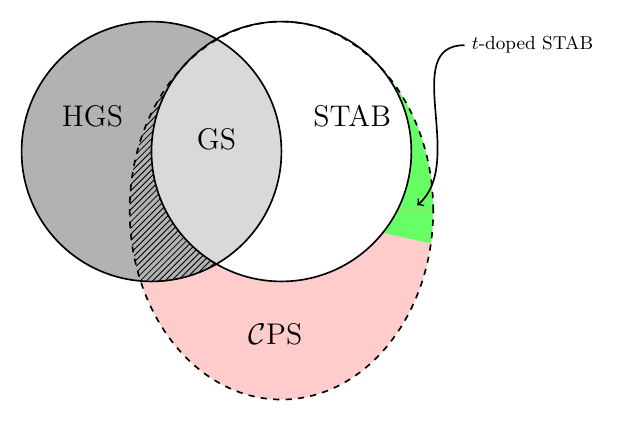}
    \caption{Relationship between Clifford-enhanced Product States (\CEPS) and other known classes: Stabilizer States (\STAB, obtained if no magic state is used), $t$-doped STAB (only $t\sim\log(n)$ regime known to be certifiable), Graph States (\GS, obtained if STAB and $C$ is made only of $CZ$), and Hypergraph States (\HGS, no intersection with \CEPS~outside \GS).}
    \label{fig:diagram}
\end{figure}

We therefore arrive at the central problem addressed in this paper: although \CEPS~form a natural family of resource states for Pauli-based quantum computation, no existing method efficiently certifies arbitrary \CEPS~using only Pauli measurements and the circuit description of the target state. Here we close this gap by introducing a certification protocol that is both sample-efficient and time-efficient: the number of required copies scales polynomially with the relevant parameters, and the associated classical pre- and post-processing can be performed in polynomial time.

To achieve this, we combine Direct Fidelity Estimation \cite{FL11direct, FOS24optimal} with a recent framework of robust fidelity witnesses \cite{CGKM21efficient}, while exploiting the Clifford structure of \CEPS~through Pauli measurement back-propagation \cite{UC24efficient}. 
The resulting protocol avoids the oracle assumptions required by more general certification methods, while still applying to resource states that may be both highly entangled and classically intractable in the computational basis.

The rest of the paper is organized as follows. In Section~\ref{section:Technical Overview}, we give an overview of the main technical ideas behind the protocol and explain how direct fidelity estimation, robust fidelity witnesses and Clifford back-propagation are combined. In Section~\ref{section:certification}, we present the certification protocol and prove its guarantees, first in the i.i.d.\ setting and then in the adversarial setting. Finally, in Section~\ref{section:VQC}, we discuss the implications of our certification result for verifying Pauli-based universal quantum computation.

\section{Technical overview}
\label{section:Technical Overview}
In this section, we present a high-level overview of the techniques used in our approach to certify \CEPS~states. It assumes prior knowledge of quantum information fundamentals recalled in Appendix \ref{sec:prelim}.
The protocol can be seen as consisting in two phases. First, a data acquisition phase during which for each received copy, a Pauli observable is sampled according to a specific {probability distribution} (see below), then conjugated by $C$ (or {back-propagated through} the circuit), which virtually applies $C^\dagger$—the inverse of the Clifford circuit of the target \CEPS~state in Eq.~(\ref{eq:intro:CliffordProductStates})—to the state, and finally measured.
Then, that data is aggregated in a specific way to form a {witness} of the fidelity, a quantity that lower-bounds the fidelity. The witness is computed with respect to the underlying product state, since the Clifford has been uncomputed on the Pauli observables via back-propagation.

The initial building block is {Direct Fidelity Estimation} (DFE). It is a method to estimate the fidelity of a state with respect to a target state using single-copy Pauli measurements sampled according to a probability distribution that is computable from the classical description of the target state \cite{FL11direct}. We present the technical preliminaries required for that in Section \ref{subsection:prelims:DFE}, taken from \cite{FOS24optimal}. In particular, we present the probability distributions on Pauli operators inspired from Importance Sampling, that we will use to derive a new randomized procedure in Section \ref{section:certification}.

Unfortunately, in DFE, the worst-case sample complexity scales exponentially in the system size. In particular, in Appendix \ref{appendix:l-1 norm of CPS}, we show that this is the case for \CEPS. Thus, the impracticality of DFE in the context of \CEPS~motivates the use of alternative approaches for efficient state certification. One such approach relies on the introduction of {robust fidelity witnesses} \cite{CGKM21efficient,UC24efficient,HPS24certifying,GHO25few}, quantities that are easier to estimate while still providing a guaranteed bound on the fidelity. We define those in \ref{subsection:prelims:witness}. These methods were originally introduced in the context of Continuous Variables quantum information: by using it in the current context, we show that they can indeed fit perfectly for Discrete Variables systems.

Finally, in \ref{subsection:prelims:backprop} we recall the {back-propagation} of Pauli observables through Clifford circuits. While this is mathematically obvious given the definition of the Clifford and Pauli groups, this is particularly relevant in the current context to virtually uncompute $C$ from $C(\bigotimes\ket{\psi_i})$. Indeed, it consists of applying $C^\dagger$ on the Pauli observables rather than on the state itself, since the verifier has the ability to perform Pauli measurements only (and not to apply a Clifford circuit). This produces the same measurement outcomes statistics and is thus perfectly equivalent in our case since we only care about the outcomes statistics and not the post-measurement state.
We use the fact that Pauli observables remain Pauli observables when conjugated by a Clifford circuit, by definition.  
This is possible because of the specific structure of \CEPS, allowing us to back-propagate {expectation values} 
rather than POVM elements as done in \cite{UC24efficient}.

\subsection{Direct Fidelity Estimation (DFE)}
\label{subsection:prelims:DFE}

We here remind the optimal method for DFE, following \cite{FOS24optimal}. 
In what follows, let $d=2^n$ be the dimension of the Hilbert space and $\mathcal{P}_n$ denote the group of $n$-qubit Pauli operators. 
We define the {characteristic function} of a pure state $\ket{\psi}$ as
\begin{equation}
    \chi_\psi(P) = \Tr[\ketbra{\psi} P].
\end{equation}
The state can be expanded in the Pauli basis as
\begin{equation}
\label{expansion}
    \ketbra{\psi} = \frac{1}{d} \sum_{P \in \mathcal{P}_n} \chi_\psi(P) P.
\end{equation}
The $\ell_1$-norm of the characteristic function is defined as
\begin{equation}
    \|\chi_\psi\|_1 = \sum_{P \in \mathcal{P}_n} |\chi_\psi(P)|.
\end{equation}
In this convention, stabilizer states satisfy $\|\chi_\psi\|_1 = d$. For $d=2$, the magic $T$-state satisfies $\|\chi_\psi\|_1 = 1 + \sqrt{2}$. More generally, for a single qubit with Bloch vector $\vec r$ we have $\|\chi_\psi\|_1 = 1+\|\vec r\|_1 \le 1+\sqrt3$, with equality for $\vec r = (1,1,1)/\sqrt3$; the $T$-state is therefore not the extremal case. Note that we use an unnormalized convention for the $\ell_1$-norm, including the identity Pauli. This differs from the normalized convention of Ref.~\cite{FOS24optimal}, where the identity contribution is removed; the two conventions are related by an elementary rescaling.

We define the probability distribution $\mathcal{D}_\psi$ over all $P \in \Pauli_n$ as
\begin{equation}
\label{DpsiP}
    \mathcal{D}_\psi(P) = \frac{|\chi_\psi(P)|}{\norm{\chi_\psi}_1}.
\end{equation}
The distribution $\mathcal{D}_\psi$ can be sampled from in $O(d)$ time.
Using these definitions, the fidelity $F(\rho, \ket{\psi}) = \Tr[\rho \ketbra{\psi}]$ is expressed as a single expectation value:
\begin{align}    
    F(\rho, \ket{\psi}) 
        &= \frac{1}{d} \sum_{P \in \Pauli_n} \chi_\psi(P)\, \Tr[\rho P] \nonumber \\
        &= \frac{1}{d} \sum_{P \in \Pauli_n} |\chi_\psi(P)|\sgn(\chi_\psi(P))\, \Tr[\rho P] \nonumber \\
        &= \frac{\norm{\chi_\psi}_1}{d} \sum_{P \in \Pauli_n} \frac{|\chi_\psi(P)|}{\norm{\chi_\psi}_1} \sgn(\chi_\psi(P)) \Tr[\rho P] \nonumber \\
        &= \mathbb{E}_{P \sim \mathcal{D}_\psi}
           \left[
           \frac{\norm{\chi_\psi}_1}{d} \, 
             \sgn(\chi_\psi(P)) \, \Tr[\rho P]
           \right].
\end{align}
Hence, the estimation of the fidelity $F(\rho, \ket{\psi})$ reduces to estimating the expectation value of a random Pauli observable drawn from $\mathcal{D}_\Psi$.
Define
\begin{equation}
    g(P) = (\norm{\chi_\psi}_1/d) \, \sgn(\chi_\psi(P)) \, \Tr[\rho P],
\end{equation}
so that $F(\rho, \ket{\psi}) = \mathbb{E}_{P \sim \mathcal{D}_\psi}[g(P)]$.

By performing repeated Pauli measurements according to $\mathcal{D}_\psi$ and averaging the outcomes, one obtains an unbiased estimator of the fidelity. To achieve additive precision $\epsilon$ and failure probability $\delta$, $N> N' \in O(d\, \epsilon^{-2} \log(1/\delta))$ copies of $\rho$ are sufficient, following Hoeffding's inequality (Lemma \ref{lemma:prelim:hoeffding}).

Although DFE achieves the information-theoretically optimal scaling for direct fidelity estimation, the sample cost grows exponentially with the system size $n$, since $d = 2^n$. Consequently, this method remains practical only for small-scale quantum systems.

\subsection{Fidelity witnesses}
\label{subsection:prelims:witness}
We hereby present the concept of fidelity witness as in \cite{CGKM21efficient, UC24efficient}. A {robust fidelity witness} $W(\rho, \ket{\psi})$ is a quantity that satisfies
\begin{equation}
\label{eq: robust fidelity witness}
    f(F(\rho, \ket{\psi}))
    \leq W(\rho, \ket{\psi}) 
    \leq F(\rho, \ket{\psi}) ,
\end{equation}
where $f$ is a continuous function such that $f(1) = 1$. 
The right-hand inequality shows that the witness is a {lower bound} on the true fidelity $F(\rho, \ket{\psi})$, while the left-hand inequality characterizes the region where the witness provides meaningful information, namely where $W(\rho, \ket{\psi}) \ge 0$.

For a target $n$-qubit product state $\ket{\psi}=\bigotimes_{i=1}^n\ket{\psi_i}$, a fidelity witness was introduced in \cite{CGKM21efficient} as
\begin{equation}
    \label{eq:prelims:witness}
    W(\rho, \ket{\psi}) 
    \coloneqq 1 - 
    \sum_{i=1}^{n}
    \big(1 - F(\rho_i, \ket{\psi_i})\big),
\end{equation}
where $\rho_i$ denotes the single-qubit reduced state of $\rho$ obtained by tracing out all subsystems but the $i$-th one.
This expression combines local fidelities into a global quantity that certifies closeness to the target product state while remaining efficiently computable from local measurements. The {robustness} property of this witness is given by \cite{CGKM21efficient}
\begin{equation}
    \label{eq:prelims:witness:robustness}
    1-n(1-F(\rho, \ket\psi)) \leq W(\rho, \ket\psi)\leq F(\rho, \ket\psi)\; .
\end{equation}
In other words, if $F(\rho,\ket\psi)=1-\alpha$, then the bound only guarantees $W(\rho,|\psi\rangle)\geq 1-n\alpha$: the infidelity may be amplified by a factor $n$ at the level of the witness. We refer to this amplification factor as the robustness loss of the witness, which is here linear in the system size. Conversely, obtaining $W(\rho,|\psi\rangle)\geq 1-\epsilon$ requires $F(\rho,|\psi\rangle)\geq 1-\epsilon/n$.

\subsection{Measurement back-propagation}
\label{subsection:prelims:backprop}
 
We use the term {back-propagation} for the classical post-processing of Pauli measurement outcomes that compensates for a preceding Clifford operation before the measurement. Formally, let $\rho$ be an $n$-qubit density operator, $C\in\Clifford_n$, and $P\in\Pauli_n$. Measuring $P$ on $C^\dagger \rho C$ yields the expectation value
\begin{equation}
\Tr\left[
(C^\dagger\rho C)P
\right]
= \Tr\left[\rho\,(C P C^\dagger)\right],
\end{equation}
by cyclicity of the trace. Thus, measuring $P$ on the uncomputed state $C^\dagger\rho C$ is equivalent in expectation value to measuring the \emph{back-propagated} observable $C P C^\dagger$ directly on $\rho$. Since $C$ is a Clifford operation, $C P C^\dagger\in\Pauli_n$, so the back-propagated measurement remains a Pauli measurement. Since we only care about the measurement outcomes statistics and not about the post-measurement states, these two approaches are totally equivalent.
In other words, “uncomputing” $C$ (i.e., applying $C^\dagger$ on $\rho$ before measuring $P$) can be replaced by measuring $C P C^\dagger$ on $\rho$ at the level of expectation values. This approach is depicted in Fig.~\ref{fig:backprop}. Back-propagation of POVM elements was already introduced in \cite{UC24efficient} in the context of verification in continuous variables quantum information processing tasks. In discrete variables, it is also crucially used in \emph{prepare-and-send} verification protocols \cite{FK17unconditionally, KKLM22unifyinga}. Its usage here in this receive-and-measure protocol showcases again the usefulness of the efficient mapping of the Pauli group through Clifford evolutions in the context of verification tasks.
\begin{figure}[h]
    \centering
    \includegraphics[width=1\linewidth]{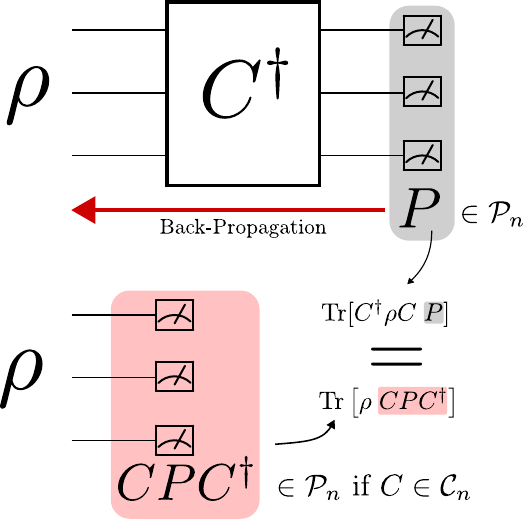}
    \caption{Measurement back-propagation: measuring observable $P$ on $C^\dagger \rho C$ yields the same outcome distribution as
     measuring $C P C^\dagger$ on $\rho$. By definition of the Clifford group, $CPC^\dagger$ is also a Pauli observable if $C\in\Clifford_n$.}
    \label{fig:backprop}
\end{figure}

\section{Quantum state certification with few Pauli measurements}
\label{section:certification}
In this section, we give a protocol to check whether a prover correctly prepares an $n$-qubit target state in the \CEPS\ class, meaning a quantum state of the form
        \(
        \ket{\Psi} = C\left(\bigotimes_{i=1}^n \ket{\psi_i}\right),
        \)
        where $C$ is a known Clifford circuit and the $\ket{\psi_i}$ are arbitrary known single-qubit states, possibly including magic states.
        More formally, 
the verifier holds the classical description of $\ket\Psi$ ($C$ and the single-qubit states) and defines a target precision $\epsilon$ and the maximum allowed failure probability $\delta$, meaning that the protocol should succeed with probability $\geq 1-\delta$. Then, we first consider the case where the verifier receives from the prover $N+1$ copies of an unknown quantum state $\rho$ on $n$ qubits, one of which is set aside unmeasured to serve as the protocol's output state (the prover is said to be \textit{i.i.d.}).
The goal of the protocol is for the verifier to use (non-adaptive) Pauli measurements and classical post-processing to output an acceptance flag if $F(\rho, \ket\Psi) \geq 1-c$, a rejection flag if $F(\rho, \ket\Psi)\leq 1-s$ where $c$ and $s$ are completeness and soundness variables parametrized by $\epsilon$ (Theorem~\ref{theorem:certification} achieves $c=\frac{\epsilon}{3n}$ and $s=\epsilon$).
The efficiency analysis of the protocol consists of computing how many samples the verifier has to ask from the prover in order to achieve this certification task with precision $\epsilon$ and success probability $\geq1-\delta$, knowing the state has size $n$. In other terms, it amounts to computing how $N$ scales with $\epsilon, \delta, n$.

If the prover is not i.i.d., the state received by the verifier might be entangled across the $N$ samples, so they cannot be considered as independent copies anymore. However the verifier has the same aim: perform Pauli measurements on some of the received samples, and leave one remaining subsystem that is accepted if its fidelity is close enough to the target state.

We introduce some notations in Section\ \ref{subsection:notations}, describe the i.i.d.\ protocol in Section\ \ref{subsection:iid} and generalize that approach to the non-i.i.d.\ setting in Section\ \ref{rmk:malicious}. In both cases, we provide theorems showing that the protocols are efficient. In the i.i.d.\ setting, $N \sim O(n^2/\epsilon^2 \; \log(1/\delta))$, while in the non-i.i.d.\ setting, we have $N\sim \tilde O(n^5/(\delta^2\epsilon^6))$, with $\tilde O$ hiding polylogarithmic factors.

\subsection{Notations}
\label{subsection:notations}
Here, we introduce relevant notations for the certification protocol.
            For
            $N\in\mathbb
            N$, let $\rho^{\otimes N}$ be $N$ copies of an unknown, possibly mixed, $n$-qubit quantum state.
            Let $\np=\sum_\qubit \norm{\chi_{\psi_\qubit}}_1\le n(1+\sqrt3)$. We define the probability distribution on qubit indices $i\in\{1,\dots,n\}$:
            \begin{equation}
            \label{mu_i}
                \mu(\qubit) \coloneqq \frac{\norm{\chi_{\psi_\qubit}}_1}\np.
            \end{equation}
            For all $i\in\{1,\dots,n\}$, we also define the probability distributions over single-qubit Pauli operators
            \begin{equation}
            \label{Dpsi_iP}
                \mathcal{D}_{\psi_i}(P) \coloneqq \frac{|\chi_{\psi_i}(P)|}{\norm{\chi_{\psi_i}}_1}.
            \end{equation}
             Finally, we define a joint probability distribution on qubit indices $i$ and single-qubit Pauli operators $P$ as
             \begin{equation}
            \label{DpsiPi}
                \pdist(\qubit, P) \coloneqq \mu(\qubit)\pdist_{\psi_\qubit}(P)=\frac{|\chi_{\psi_i}(P)|}\np.
            \end{equation}
            By construction, this joint distribution can be sampled from classically by sampling the subsystem index $i$ from $\mu(i)$, and then the single-qubit Pauli operator $P$ from $\mathcal{D}_{\psi_i}(P)$.

\subsection{The protocol in the i.i.d.\ setting}
\label{subsection:iid}

            We now state our main \CEPS\ certification protocol in the i.i.d.\ setting, which is also described in Fig.~\ref{fig:protocol:certification}.

        \begin{protocol}[Certification of \CEPS]
            \label{protocol:certification}
            Let $\ket\Psi = C\left(\bigotimes_{\qubit=1}^n\ket{\psi_\qubit}\right)$ be the target state to certify and let $\epsilon>0$ be a precision parameter. With the notations above, the protocol goes as follows:
            \begin{enumerate}
                \item The verifier receives $N+1$ copies of $\rho$ and sets the last copy aside, unmeasured; we denote it $\rho_{\mathrm{fin}}$.
                \item For each copy $\copyindex\leq N$, sample classically $(\qubit_j, P_j)\sim \pdist$, then measure $C P_j^{(i_j)} C^\dagger$ on $\rho$ (where $P^{(i)}$ denotes the $n$-qubit Pauli operator acting as $P$ on the $i$-th qubit and as identity on all the other qubits), obtaining the outcome $x_j\in\{-1, 1\}$.
                \item Compute $\bar X \coloneqq \frac{1}{N}\sum_{\copyindex=1}^N \tfrac 1 2 \mathrm{sgn}(\chi_{\psi_{\qubit_j}}(P_j)) \,x_j$.
                \item Output $\bar W\coloneqq 1-n+\np\times \bar X$ together with the unmeasured copy $\rho_{\mathrm{fin}}$. Accept if $\bar W\ge1-\frac23\epsilon$, reject otherwise.
            \end{enumerate}
        \end{protocol}

        In the i.i.d.\ setting, the held-out copy $\rho_{\mathrm{fin}}$ is simply in state $\rho$; it becomes essential in the non-i.i.d.\ setting (Section~\ref{rmk:malicious}) since it might be correlated with other samples. The held-out copy is also particularly important for the verification of computations (Section~\ref{section:VQC}) since it will be used to perform the measurements that actually drive the computation, after certification. At a high level, Protocol~\ref{protocol:certification} uses back-propagation (see Section~\ref{subsection:prelims:backprop}) to estimate single-qubit fidelities between reduced states of $C^\dag\rho C$ and the corresponding single-qubit target state $\ket{\psi_i}$ via DFE (see Section~\ref{subsection:prelims:DFE}). This DFE requires more samples when $\ket{\psi_i}$ is a magic state. To compensate for this, the qubits $i$ are randomly selected using importance sampling via the $\mu(i)$ distribution. These single-qubit fidelities are then combined to estimate a robust witness of the fidelity $F(C^\dag\rho C,C^\dag\ket\Psi)=F(\rho,\ket\Psi)$ (see Section~\ref{subsection:prelims:witness}). Throughout the protocol, the prefactor $\tfrac12$ arises because the relevant probability distributions are over single-qubit Paulis, i.e.\ the $d=2$ case of Section~\ref{subsection:prelims:DFE}.
        
        Protocol~\ref{protocol:certification} only requires single-qubit Pauli measurements and simple classical post-processing (sampling from univariate distributions), making it time efficient in $N$ and $n$. We now show that this protocol provides a certification of the target \CEPS\ $\ket\Psi$ that is also sample-efficient:

        \begin{theorem}[Efficient certification of \CEPS]
        \label{theorem:certification}
            Let $\epsilon,\delta>0$. With the notations of Protocol~\ref{protocol:certification}, assuming $N\ge \Niid$ where one can take
            \begin{equation}
            \label{eq:certification:sample iid}
            \Niid = \frac{9\np^2}{2\epsilon^2} \log\left(\frac{1}{\delta}\right)
                    \in\O\left(
                    \frac{n^2}{\epsilon^2} \log\left(\frac{1}{\delta}\right)
                    \right)
                    \; ,
            \end{equation}
                 then Protocol~\ref{protocol:certification} rejects if $F(\rho,\ket\Psi)<1-\epsilon$ and accepts if $F(\rho,\ket\Psi)\ge1-\frac\epsilon{3n}$, with probability greater than $1-\delta$.
        \end{theorem}
        The constant $\frac13$ is chosen for simplicity and can be replaced by $1-\frac2k$ for any constant $k>2$ with a constant increase in sample complexity, by replacing the acceptance condition in step 4 of Protocol~\ref{protocol:certification} by $\bar W\ge1-(1-\frac1k)\epsilon$ and setting $\lambda=\frac\epsilon k$ in the proof.

        \begin{proof}
        Recall that $\jdist$ is a joint probability distribution over qubit indices and single-qubit Pauli observables. We define the estimator
        \begin{equation}
        \label{def_f}
            f(i,P,x)\coloneqq \tfrac 1 2 \sgn(\chi_{\psi_i}(P)) \times x\;,
        \end{equation}
            where $(i,P)\sim\mathcal D$ and $x\in\{+1, -1\}$ is a random variable describing the binary outcome of measuring the observable $C P^{(i)}C^\dagger$ on $\rho$ (where $P^{(i)}$ denotes the $n$-qubit Pauli operator acting as $P$ on the $i$-th qubit and as identity on all the other qubits). Then, the quantity $\bar X$ computed at step $3$ of Protocol~\ref{protocol:certification} using $N$ samples $(i_j,P_j,x_j)$ is an empirical estimate of the expected value of $f$ over the classical randomness of $(i,P)$ and the quantum randomness of $x$, which writes $X=\mathbb E_{i,P,x}[f(i,P,x)]$.
        
        Now we have:
\begin{widetext}
\begin{align}
    X &= \mathbb E_{i,P,x}[f(i,P,x)] \nonumber \\
    &= \sum_{i=1}^n \mu(i) \sum_{P \in \mathcal{P}_1} \mathcal{D}_{\psi_i}(P) \left[ \tfrac 1 2 \sgn(\chi_{\psi_i}(P)) \Tr[\rho \, C P^{(i)} C^\dagger] \right] \nonumber \\
    &= \sum_{i=1}^n \frac{\|\chi_{\psi_i}\|_1}{\np} \sum_{P \in \mathcal{P}_1} \frac{|\chi_{\psi_i}(P)|}{\|\chi_{\psi_i}\|_1} \left[ \tfrac 1 2 \sgn(\chi_{\psi_i}(P)) \Tr[\rho \, C P^{(i)} C^\dagger] \right] \nonumber \\
    &= \frac{1}{\np} \sum_{i=1}^n \left( \frac{1}{2} \sum_{P \in \mathcal{P}_1} \chi_{\psi_i}(P) \Tr[C^\dagger \rho C \, P^{(i)}] \right) \nonumber \\
    &= \frac{1}{\np} \sum_{i=1}^n \left( \frac{1}{2} \sum_{P \in \mathcal{P}_1} \chi_{\psi_i}(P) \Tr[(C^\dagger \rho C)_i \, P] \right) \nonumber \\
    &= \frac{1}{\np} \sum_{i=1}^n \Tr\left[ (C^\dagger \rho C)_i \left( \frac{1}{2} \sum_{P \in \mathcal{P}_1} \chi_{\psi_i}(P) P \right) \right] \nonumber \\
    &= \frac{1}{\np} \sum_{i=1}^n \Tr[(C^\dagger \rho C)_i \, \ketbra{\psi_i}] \nonumber \\
    &= \frac{1}{\np} \sum_{i=1}^n F((C^\dagger \rho C)_i, \ket{\psi_i}).
\end{align}
\end{widetext}
        where we used Eq.~\eqref{def_f} and the Born rule in the second line, Eqs.~\eqref{mu_i} and~\eqref{Dpsi_iP} in the third line, the cyclicity of the trace in the fourth line, where in the fifth line $(C^\dag\rho C)_i$ denotes the $i$-th reduced subsystem of $C^\dag\rho C$, and where we used Eq.~\eqref{expansion} in the seventh line and Eq.~\eqref{fidover} (from Appendix \ref{sec:prelim}) in the last line.
        
        With Eq.~\eqref{eq:prelims:witness} we thus obtain 
        \begin{equation}
            \label{eq:equality witness sum of fid}
            W(C^\dagger\rho C,C^\dagger\Psi C) = 1-n+\np X\; .
        \end{equation}

            \begin{figure*}[t]
                \centering
                \includegraphics[width=0.9\textwidth]{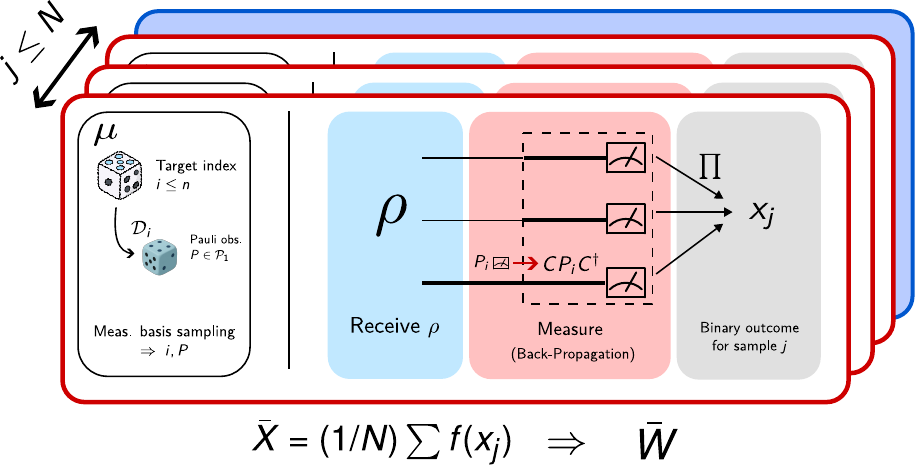} 
                \caption{Visual description of our protocol to certify $\mCEPS$. With the notations of Protocol \ref{protocol:certification}, here for copy $j$ for which we sample $\qubit, P$, we have $f(i_j,P_j,x_j) = \tfrac 1 2 \mathrm{sgn}(\chi_{\psi_{\qubit_j}}(P_j)) \,x_j$ (cf. Eq.~\eqref{def_f}). The estimate for the witness $\bar W$ is computed from the estimator $\bar X$ using $\bar W = 1-n+\np\times \bar X$.
                The samples in red boxes are used for certification, while
                the blue box is the sample left unmeasured, the output copy $\rho_{\mathrm{fin}}$.}
                \label{fig:protocol:certification}
            \end{figure*}

        Next, we show how close the empirical estimate $\bar X=\frac1N\sum_{j=1}^Nf(i_j,P_j,x_j)$ is to $X=\mathbb E_{i,P,x}[f(i,P,x)]$. Since the range of $f$ is $\max_{i,P,x}f(i,P,x)-\min_{i,P,x}f(i,P,x)\leq1$, Hoeffding's inequality (see Lemma \ref{lemma:prelim:hoeffding}) implies
        \begin{equation}
            \Pr[
            \bar X - X \geq \lambda
            ] \leq \exp
            \left(
            -2N\lambda^2
            \right),
        \end{equation}
        for any $\lambda>0$, and the same bound holds for $\Pr[X - \bar X \geq \lambda]$.
        Furthermore, by Eq.\ (\ref{eq:equality witness sum of fid}) we have $W=1-n +\np X$, so
        \begin{equation}
            \begin{aligned}
            \mathrm{Pr}[
            \bar W - W \geq \lambda
            ] &=
            \mathrm{Pr}\left[
            \bar X - X \geq \dfrac{\lambda}{\np}
            \right] \\
            &\leq
            \exp{
            \left(
            -\dfrac{2N \lambda^2}{\np^2}
            \right)
            },
            \end{aligned}
        \end{equation}
        and similarly for $\Pr[W - \bar W \geq \lambda]$, each of which is smaller than $\delta>0$ as soon as $N\ge N(\lambda)$ where
            \begin{equation}
                N(\lambda)\coloneqq\frac{\np^2}{2\lambda^2}\log\left(\dfrac{1}{\delta}\right)\; .
            \end{equation}
 
        By cyclicity of the trace we have $F(C^\dagger\rho C,C^\dagger\Psi C)=F(\rho,\ket\Psi)$ so Eq.~\eqref{eq:prelims:witness:robustness} implies
        \begin{equation}
            1-n(1-F(\rho,\ket\Psi)) \leq W(C^\dagger\rho C,C^\dagger\Psi C) \leq F(\rho,\ket\Psi).
        \end{equation}
        Hence, for $N\ge N(\lambda)$, each of the events $\bar W \leq W + \lambda$ and $\bar W \geq W - \lambda$ holds with probability greater than $1-\delta$. For soundness, $\bar W \leq W+\lambda \leq F(\rho,\ket\Psi)+\lambda$, so $F(\rho,\ket\Psi)<1-\epsilon$ implies $\bar W<1-\epsilon+\lambda$ with probability greater than $1-\delta$. For completeness, $\bar W \geq W-\lambda \geq 1-n(1-F(\rho,\ket\Psi))-\lambda$, so $F(\rho,\ket\Psi)\ge1-\frac\epsilon{3n}$ implies $\bar W\ge1-\epsilon/3-\lambda$ with probability greater than $1-\delta$. Setting $\lambda=\epsilon/3$ concludes the proof, with $\Niid=N(\epsilon/3)\in\mathcal O(\frac{n^2}{\epsilon^2}\log(\frac{1}{\delta}))$ since $m\le n(1+\sqrt3)$.
        \end{proof}
        As a result, we can efficiently certify any $n$-qubit $\mCEPS$ using only (non-adaptive) single-qubit Pauli measurements on a number of samples that scales quadratically in $n$ and with efficient sampling and post-processing of measurement outcomes.

\subsection{Extension to the non-i.i.d.\ setting}

\label{rmk:malicious}
 While Protocol \ref{protocol:certification} is formulated for the i.i.d.\ setting, it naturally extends to the non-i.i.d.\ (adversarial) scenario thanks to its non-adaptive and incoherent measurement structure. We hereby use the results and procedure presented by \cite{FKMO24learning}. By incorporating a classical preprocessing phase consisting of a random permutation of the subsystems, the verifier can achieve the same fidelity guarantees for a held-out state, even when the prover provides an arbitrarily correlated joint state.
 The resulting protocol is described in Protocol \ref{protocol:verification} and an intuitive depiction can be found in Fig~\ref{fig:adversarial}.
\begin{protocol}[Certification of \CEPS, beyond i.i.d.]
\label{protocol:verification}

Let $\ket\Psi = C\left(\bigotimes_{\qubit=1}^n\ket{\psi_\qubit}\right)$ be the target state to certify, $\epsilon>0$ be a threshold of acceptance. Also, let $N_1,N_2\in\mathbb N$ be such that $N_1>N_2$. Denoting by $\rho^{1\ldots N_1}\in(\Hilbert^{\otimes n})^{\otimes N_1}$ the state prepared by the prover,
the protocol goes as follows. The verifier first chooses a random partition of the set $\{1, \ldots, N_1\}$ into three subsets with respective size and labels:
size $N_{2}$ (\texttt{test}), $N_1-N_2-1$ (\texttt{discard}), and $1$ (\texttt{keep}). Then:
\begin{itemize}
    \item Discard the received registers $j\in \texttt{discard}$.
    \item Run steps 2--4 of Protocol \ref{protocol:certification} on the registers $j\in \texttt{test}$ (with $N=N_2$). Output the resulting acceptance/rejection flag.
    \item Keep the register $j\in\texttt{keep}$ unmeasured, denoted $\rho_{\mathrm{fin}}$, together with the above flag.
\end{itemize}
\end{protocol}

\begin{figure}
    \centering
    \includegraphics[width=\linewidth]{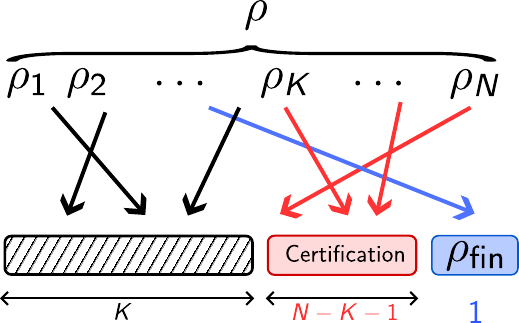}
    \caption{Illustration of the protocol to run against an adversarial prover. The verifier receives different $n-$qubit subsystems of a global system $\rho^{1\ldots N}\in(\Hilbert^{\otimes n} )^{\otimes N}$.
    The verifier chooses a random permutation of the subsystems, discards $K$ of them, applies the certification protocol on $N-K-1$ of them, and leaves the last sample as output state.
    In the text, $N$ is replaced by $N_1$ and $K$ is replaced by $N_2$.}
    \label{fig:adversarial}
\end{figure}

 To ensure security with parameters $\epsilon, \delta$, sample complexity in this setting increases but remains polynomial in the system size $n$, ensuring that the protocol is safe for deployment in untrusted environments such as delegated quantum computing.
 This is captured by Theorem \ref{theorem:verification security}.
 We provide the formal lemma and complexity analysis for this generalization in Appendix \ref{section:verification}.
     \begin{theorem}[Efficient certification of \CEPS, beyond i.i.d.]
     \label{theorem:verification security}
            Let $\epsilon,\delta>0$. With the notations of Protocol~\ref{protocol:verification}, assuming 
            $N_2\ge \Niid$, with $\Niid$ as in Theorem~\ref{theorem:certification},
            and
            \(
            N_1\in\Omega\!\left(
            \frac{n}{\delta^2\epsilon^2}\,\Niid^{\,2}\,
            \log^2\!\Big(\frac{\Niid}{\delta}\Big)
            \right),
            \)
            then Protocol~\ref{protocol:verification} rejects if $F(\rho_{\mathrm{fin}},\ket\Psi)<1-\epsilon$ and accepts if $F(\rho_{\mathrm{fin}},\ket\Psi)\ge1-\frac\epsilon{3n}$, with probability greater than $1-\delta$.
        \end{theorem}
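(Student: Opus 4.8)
The plan is to recognize that Protocol~\ref{protocol:verification} is exactly an instance of the lifting construction $\mathcal B$ of Lemma~\ref{lemma:prelim:iid}, with the inner i.i.d.\ protocol $\mathcal A$ taken to be the certification Protocol~\ref{protocol:certification}. The statement then follows by composing the i.i.d.\ guarantee of Protocol~\ref{protocol:certification} (the preceding certification theorem) with Lemma~\ref{lemma:prelim:iid}, with no loss of constants or thresholds at either interface.

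First I would verify that Protocol~\ref{protocol:certification} is admissible as the $\mathcal A$ in Lemma~\ref{lemma:prelim:iid}: it consumes i.i.d.\ copies of $\rho$, and in each round it performs a single-qubit Pauli measurement whose basis is drawn by classical sampling from the fixed distribution $\pdist$. In particular the measurements are non-adaptive (the Pauli to measure does not depend on previous outcomes), single-copy and incoherent (no joint measurement across copies), so Protocol~\ref{protocol:certification} lies in the class to which Lemma~\ref{lemma:prelim:iid} applies. Moreover, the certification theorem states that with $N_{iid}\in\O(n^2\epsilon^{-2}\log(1/\delta))$ copies, Protocol~\ref{protocol:certification} accepts with probability at most $\delta$ when $F(\rho,\ket\Psi)<1-\epsilon$ and rejects with probability at most $\delta$ when $F(\rho,\ket\Psi)\ge1-\frac{\epsilon}{3n}$; these are precisely the two hypotheses imposed on $\mathcal A$ in Lemma~\ref{lemma:prelim:iid}, with matching thresholds $1-\epsilon$ and $1-\frac{\epsilon}{3n}$ and matching failure probability $\delta$.

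Next I would match the structure of Protocol~\ref{protocol:verification} to that of $\mathcal B$: choosing a uniformly random partition of $\{1,\dots,N_1\}$ into \texttt{test} ($N_2$ registers), \texttt{discard} ($N_1-N_2-1$ registers) and \texttt{keep} ($1$ register) is the same as randomly permuting the $N_1$ subsystems, discarding $N_1-N_2-1$ of them at random, running $\mathcal A=$ Protocol~\ref{protocol:certification} on the remaining test registers, and holding out one final register $\rho^{\mathrm{fin}}$. Taking $N_2\ge N_{iid}$ (if $N_2>N_{iid}$ one simply runs $\mathcal A$ on a further uniformly random sub-sample of $N_{iid}$ of the test registers, which only helps) and $N_1=N_{non-iid}$, Lemma~\ref{lemma:prelim:iid} then yields directly that Protocol~\ref{protocol:verification} accepts with probability at most $\delta$ when $F(\rho^{\mathrm{fin}},\ket\Psi)<1-\epsilon$ and rejects with probability at most $\delta$ when $F(\rho^{\mathrm{fin}},\ket\Psi)\ge1-\frac{\epsilon}{3n}$, as soon as $N_1\in\Omega\big(\frac{\log d}{\delta^2\epsilon^2}\,N_{iid}^2\,\log^2(N_{iid}/\delta)\big)$. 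Since $d=2^n$ and thus $\log d=\Theta(n)$, this is exactly the stated condition $N_1\in\Omega\big(\frac{n}{\delta^2\epsilon^2}\,N_{iid}^2\,\log^2(N_{iid}/\delta)\big)$, which concludes the proof. I would also note that time-efficiency is preserved: the permutation and discarding cost $\mathrm{poly}(N_1)$ classical operations and each call to Protocol~\ref{protocol:certification} is efficient, so Protocol~\ref{protocol:verification} runs in time polynomial in $N_1$ and $n$.

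The argument is essentially a black-box composition, so I expect the only point requiring care to be the bookkeeping at the two interfaces: (i) confirming that Protocol~\ref{protocol:certification} genuinely uses only non-adaptive, single-copy incoherent measurements, which is what licenses invoking Lemma~\ref{lemma:prelim:iid} rather than a weaker non-i.i.d.\ reduction; and (ii) checking that the fidelity windows $1-\epsilon$ versus $1-\frac{\epsilon}{3n}$ and the failure probability $\delta$ line up verbatim between the i.i.d.\ certification theorem and the hypotheses of Lemma~\ref{lemma:prelim:iid}, so that chaining the two incurs no loss. Everything else is substitution of the sample-complexity bounds.
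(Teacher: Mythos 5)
Your proposal is correct and follows essentially the same route as the paper: both arguments invoke Lemma~\ref{lemma:prelim:iid} as a black-box lifting, after checking that Protocol~\ref{protocol:certification} uses only non-adaptive, single-copy incoherent measurements and that its acceptance/rejection guarantees match the lemma's hypotheses, and then substitute $\log d=\Theta(n)$ and the i.i.d.\ sample complexity into the lemma's bound. Your write-up is in fact somewhat more explicit than the paper's (notably in matching the random partition of Protocol~\ref{protocol:verification} to the permute-discard-test-keep structure of $\mathcal B$), but no new idea is introduced.
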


        As a corollary of the above, verifying an $n$-qubit state of the \CEPS~class with precision $\epsilon$ and failure probability at most $\delta$ requires a total number of samples $N_1\geq \Nnoniid$ where
        \begin{equation}
            \label{eq:verification:scaling}
        \Nnoniid \in \tilde\O\left(\frac{n^5}{\delta^2\epsilon^6}\right)\;,
        \end{equation}
        with $\tilde O$ hiding polylogarithmic factors in $n,\epsilon, \delta$.

\section{Verification of universal quantum computations}
\label{section:VQC}

In this section, we show that the protocol introduced in Section~\ref{section:certification} can be used to verify universal quantum computations. This follows from the fact that certain states in the $\mCEPS$ class serve as resource states for universal computation.

We consider the Pauli-MBQC model \cite{SDKO07direct}, which consists of the standard Measurement-Based Quantum Computation (MBQC) model where nodes are prepared in magic states (such as $\ket{T} = T\ket{+}$) and processed using only adaptive Pauli measurements. This model is particularly relevant for our protocol for three reasons:
\begin{enumerate}
    \item It is universal for quantum computation \cite{SDKO07direct}.
    \item The underlying resource state $\ket{\Psi}$ belongs to the $\mCEPS$ class, as it is a tensor product of single-qubit states (including magic states) entangled by a Clifford circuit (typically a collection of $CZ$ gates).
    \item The computation relies solely on Pauli measurements, introducing no extra measurement requirements beyond those needed for certification.
\end{enumerate}
Below, we state our protocol for verifying computations in the Pauli-MBQC model. It is directly based on the certification protocols presented in the previous section: in the i.i.d.\ setting the verifier uses Protocol~\ref{protocol:certification}, while in the non-i.i.d.\ setting it uses Protocol~\ref{protocol:verification}, with the corresponding sample costs. We use the term ``systems'' rather than ``samples'' in this protocol to avoid suggesting an i.i.d.\ assumption.
\begin{protocol}[Verification of Quantum Computations \textit{via} Pauli measurements]
    \label{protocol:verification of UQC}

    Let $\ket\Psi$ be the $\mCEPS$ resource state associated with the target computation.

    \begin{enumerate}

        \item \textbf{Certification Phase:}
        The verifier runs Protocol~\ref{protocol:certification} with $N = \Niid$ in the i.i.d.\ setting (or Protocol~\ref{protocol:verification} with $(N_1,N_2) = (\Nnoniid,\Niid)$ in the non-i.i.d.\ setting), with $\Niid$ and $\Nnoniid$ as in Eqs.~\eqref{eq:certification:sample iid} and \eqref{eq:verification:scaling}, respectively. In both cases the protocol leaves one system $\rho_{\mathrm{fin}}$ unmeasured. If the certification protocol rejects, \textbf{Abort}.

        \item \textbf{Computation Phase:}
        If the certification protocol accepts, perform the target adaptive Pauli measurements on $\rho_{\mathrm{fin}}$ to obtain the result $\rho_{\mathrm{res}}$.

    \end{enumerate}
\end{protocol}

The guarantee on the closeness of the prepared state to the target state—measured in the trace distance $D$—can be lifted to a guarantee on the closeness of the distributions of the computation outcomes via the Fuchs--van de Graaf inequalities (Eq.~\eqref{eq:prelim:fuchsvandegraaf} in Appendix \ref{sec:prelim}).
Completeness is stated for the honest i.i.d.\ prover, since an honest prover is always i.i.d. Soundness holds for the setting selected in the certification phase, i.e. either i.i.d.\ or non-i.i.d.

\begin{theorem}[Efficient verification of quantum computations \textit{via} Pauli measurements]
    Let $\epsilon,\delta>0$, and let $\ket{\Psi}$ be the $\mCEPS$ resource state associated with a target Pauli-MBQC computation with ideal output $\ket{\Psi_{\rm out}}$. Protocol~\ref{protocol:verification of UQC} satisfies the following two properties:
    \begin{enumerate}
        \item \textbf{Completeness:} for an honest i.i.d.\ prover preparing systems in a state $\rho$ such that $F(\rho,\ket{\Psi})\geq 1-\epsilon/(3n)$, the protocol accepts with probability at least $1-\delta$.

        \item \textbf{Soundness:} for any prover, except with probability at most $\delta$, if the protocol accepts, then the output guarantee~\eqref{eq:verification-output-guarantee} holds.
    \end{enumerate}
    In the acceptance case, the output guarantee is
    \begin{equation}
        \label{eq:verification-output-guarantee}
        D(\rho_{\mathrm{res}},\ketbra{\Psi_{\rm out}})\leq \sqrt{\epsilon}.
    \end{equation}
\end{theorem}

\begin{proof}
    We first prove completeness. Consider an honest i.i.d.\ prover preparing systems in a state $\rho$ such that $F(\rho,\ket{\Psi})\geq 1-\epsilon/(3n)$. By Theorem~\ref{theorem:certification}, the certification Protocol~\ref{protocol:certification} accepts with probability at least $1-\delta$. Therefore, Protocol~\ref{protocol:verification of UQC} proceeds to the computation phase with probability at least $1-\delta$.

    We now prove soundness. Upon acceptance of the certification Protocol~\ref{protocol:certification} or Protocol~\ref{protocol:verification}, we have the guarantee that
    \begin{equation}
        F(\rho_{\mathrm{fin}}, \ket\Psi) \geq 1-\epsilon
    \end{equation}
    with probability at least $1-\delta$.

    It remains to show that, in both cases, this implies the output guarantee. By the Fuchs--van de Graaf inequalities (Eq.~\eqref{eq:prelim:fuchsvandegraaf}), the trace distance between the certified resource state and the ideal one is bounded by
    \begin{equation}
        D(\rho_{\mathrm{fin}}, \ket\Psi) \leq \sqrt{1 - F(\rho_{\mathrm{fin}}, \ket\Psi)} \leq \sqrt{\epsilon}.
    \end{equation}
    The target computation can be viewed as a CPTP map $\mathcal{C}: \mathcal{H}_{2^n} \to \mathcal{H}_{2^k}$, where $k$ is the number of output qubits. This map includes the sequence of adaptive Pauli measurements and the tracing out of measured nodes\footnote{Although individual measurement outcomes are non-deterministic, the existence of a valid flow on the graph ensures that adaptive byproduct operators can compensate for the randomness of these outcomes, thereby defining a deterministic physical process from the resource state to the output space.}.

    By contractivity of the trace distance under CPTP maps, the distance between the final output state $\rho_{\mathrm{res}} = \mathcal{C}(\rho_{\mathrm{fin}})$ and the ideal output state $\ketbra{\Psi_{\rm out}} = \mathcal{C}(\ketbra{\Psi})$ is bounded by the distance between the initial resource states:
    \begin{align}
        D(\rho_{\mathrm{res}}, \ketbra{\Psi_{\rm out}})
        &= D(\mathcal{C}(\rho_{\mathrm{fin}}), \mathcal{C}(\ketbra{\Psi})) \nonumber \\
        &\leq D(\rho_{\mathrm{fin}}, \ket{\Psi}) \nonumber \\
        &\leq \sqrt{\epsilon}. \nonumber
    \end{align}
    This proves the output guarantee.
\end{proof}

\section{Conclusion}
\label{section:conclusion}

In this work, we have introduced an efficient and experimentally feasible protocol for certifying \emph{Clifford-enhanced Product States} ($\mCEPS$) using Pauli measurements only, \textit{i.e.} the minimal hardware requirement for a verifier in the receive-and-measure setting. By providing a practical verification scheme for this class, we bridge the gap between states that are easily certifiable and the resource states central to several fault-tolerant roadmaps \cite{DBMA25breaking, G23cleaner, AAAA22suppressing, MGCM25ibm, GSJ24magic}. 
While certain universal resources like hypergraph states can be verified using stabilizer-based techniques, they require multi-qubit non-Clifford gates to be prepared.
In contrast, the $\mCEPS$ class only requires single-qubit preparation and Clifford operations.  This class contains states that are useful for universal quantum computation, once supplemented by (adaptive) Pauli measurements.

The certification of the $\mCEPS$ class is inherently difficult due to two primary bottlenecks. First, the \#P-hard structure of these states makes the calculation of Pauli amplitudes classically intractable, rendering time-efficient certification via standard methods—such as those in \cite{HPS24certifying, GHO25few}—impossible. Second, the potentially high $\ell_1$-norm of their characteristic function (see Appendix \ref{appendix:l-1 norm of CPS}) implies that standard Direct Fidelity Estimation (DFE) \cite{FOS24optimal} would require exponential sample complexity. We have demonstrated that these barriers are not insurmountable. Indeed, combining DFE, robust fidelity witnesses and measurement back-propagation allowed to bypass the exponential time and/or sample complexity required by the previous methods, resulting from the entangled and non-stabilizer structure of \CEPS. Achieving so while remaining in the restricted setting of Pauli measurements only is particularly interesting because it enables clients with minimal hardware to verify universal quantum computations in the Pauli-MBQC model of \cite{SDKO07direct}.
Furthermore, by extending this approach to the adversarial scenario with tools for non-i.i.d.\ learning \cite{FKMO24learning}, we provide a protocol that remains robust in adversarial settings such as delegated quantum computing.

Finally, this work opens several promising avenues for future research. The first one starts with the structural resemblance between $\mCEPS$ and $t$-doped stabilizer states.
More specifically, \CEPS~can be seen as $t$-doped stabilizer state with only the last Clifford circuit being non-trivial. Here we show two things. These states are not a curiosity: this single non-trivial Clifford is enough to enable universal quantum computation, per the Pauli-MBQC model. We also show how to certify them for any $t$ ($\leq n$): it is, to the best of our knowledge, the first case of doped stabilizer states certifiable while $t$ is not strictly logarithmic.
Our work thus suggests the possibility of efficiently certifying doped stabilizer states beyond the logarithmic regime.

The second one suggests to re-frame our problem in the context of Hypothesis Testing and develop tools for Sampling Without Replacement, similar to ref.\ \cite{LZH22significance}, that could be adapted in our case. Indeed, it could potentially yield to significant improvement in the sample complexity (both in i.i.d.\ and non-i.i.d.\ settings), since ref.\ \cite{LZH22significance} has helped achieving an optimal scaling for graph and hypergraph state certification \cite{LZH23robust}. 

A third direction is to understand how far the present back-propagation-based approach extends beyond the magic-state setting considered here. The key structural feature used in our protocol is that the non-stabilizer resource is injected at the level of single-qubit input states while the circuit is Clifford and the measurements are Pauli. As a consequence Pauli measurements can be efficiently back-propagated through the Clifford circuit while remaining Pauli. It would be interesting to identify whether other resource-theoretic models of quantum computation share an analogous structure, for instance coherence-based models, and to determine whether their relevant resource states admit similarly efficient certification protocols with minimal measurement requirements.

\section*{Acknowledgements}

UC thanks D.\ Leichtle, L.\ Lewis and Y.\ Quek for inspiring discussions.
SAS acknowledges H.\ Thomas and V.\ Upreti for feedbacks on figures and the initial draft of the work, and O. Fawzi and R. Salzmann for suggesting the computation in Appendix \ref{appendix:l-1 norm of CPS}.
HO, MG, SAS, TM acknowledge funding from the Hybrid Quantum Initiative (HQI) supported by France 2030 under ANR grant ANR-22-PNCQ-0002.
UC acknowledges funding from the European Union’s Horizon Europe Framework Programme (EIC Pathfinder Challenge project Veriqub) under Grant Agreement No.~101114899.

\bibliographystyle{linksen}
\bibliography{bibliography}

@article{CGKM21efficient,
	title = {Efficient verification of Boson sampling},
	author = {Chabaud, Ulysse and Grosshans, Fr{\'e}d{\'e}ric and
                  Kashefi, Elham and Markham, Damian},
	journal = {Quantum},
	volume = 5,
	pages = 578,
	year = 2021,
	publisher = {Verein zur F{\"o}rderung des Open Access
                  Publizierens in den Quantenwissenschaften},
	doi = {10.22331/q-2021-11-15-578}
}

@Article{SDKO07direct,
  author        = {Marcus Silva and Vincent Danos and Elham Kashefi and Harold Ollivier},
  journal       = {New J. Phys. 9 192 (2007)},
  title         = {A direct approach to fault-tolerance in measurement-based quantum computation via teleportation},
  year          = {2007},
  issn          = {1367-2630},
  month         = jun,
  number        = {6},
  pages         = {192--192},
  volume        = {9},
  abstract      = {We discuss a simple variant of the one-way quantum computing model [R. Raussendorf and H.-J. Briegel, PRL 86, 5188, 2001], called the Pauli measurement model, where measurements are restricted to be along the eigenbases of the Pauli X and Y operators, while auxiliary qubits can be prepared both in the $\ket{+_{\pi\over 4}}:={1/\sqrt{2}}(\ket{0}+e^{i{\pi\over 4}}\ket{1})$ state, and the usual $\ket{+}:={1/ \sqrt{2}}(\ket{0}+\ket{1})$ state. We prove the universality of this quantum computation model, and establish a standardization procedure which permits all entanglement and state preparation to be performed at the beginning of computation. This leads us to develop a direct approach to fault-tolerance by simple transformations of the entanglement graph and preparation operations, while error correction is performed naturally via syndrome-extracting teleportations.},
  doi           = {10.1088/1367-2630/9/6/192},
  publisher     = {IOP Publishing},
}

@Article{FOS24optimal,
  author        = {Fawzi, Omar and Oufkir, Aadil and Salzmann, Robert},
  title         = {Optimal Fidelity Estimation from Binary Measurements for Discrete and Continuous Variable Systems},
  year          = {2024},
  month         = sep,
  abstract      = {Estimating the fidelity between a desired target quantum state and an actual prepared state is essential for assessing the success of experiments. For pure target states, we use functional representations that can be measured directly and determine the number of copies of the prepared state needed for fidelity estimation. In continuous variable (CV) systems, we utilise the Wigner function, which can be measured via displaced parity measurements. We provide upper and lower bounds on the sample complexity required for fidelity estimation, considering the worst-case scenario across all possible prepared states. For target states of particular interest, such as Fock and Gaussian states, we find that this sample complexity is characterised by the $L^1$-norm of the Wigner function, a measure of Wigner negativity widely studied in the literature, in particular in resource theories of quantum computation. For discrete variable systems consisting of $n$ qubits, we explore fidelity estimation protocols using Pauli string measurements. Similarly to the CV approach, the sample complexity is shown to be characterised by the $L^1$-norm of the characteristic function of the target state for both Haar random states and stabiliser states. Furthermore, in a general black box model, we prove that, for any target state, the optimal sample complexity for fidelity estimation is characterised by the smoothed $L^1$-norm of the target state. To the best of our knowledge, this is the first time the $L^1$-norm of the Wigner function provides a lower bound on the cost of some information processing task.},
  archiveprefix = {arXiv},
  copyright     = {Creative Commons Attribution 4.0 International},
  eprint        = {2409.04189},
  file          = {:FOS24_optimal - Optimal Fidelity Estimation from Binary Measurements for Discrete and Continuous Variable Systems.pdf:PDF},
  keywords      = {Quantum Physics (quant-ph), Mathematical Physics (math-ph), Statistics Theory (math.ST), FOS: Physical sciences, FOS: Mathematics},
  primaryclass  = {quant-ph},
  publisher     = {arXiv},
}

@Article{UC24efficient,
  author        = {Upreti, Varun and Chabaud, Ulysse},
  title         = {An efficient quantum state verification framework and its application to bosonic systems},
  year          = {2024},
  month         = nov,
  abstract      = {Modern quantum devices are highly susceptible to errors, making the verification of their correct operation a critical problem. Usual tomographic methods rapidly become intractable as these devices are scaled up. In this paper, we introduce a general framework for the efficient verification of large quantum systems. Our framework combines robust fidelity witnesses with efficient classical post-processing to implement measurement back-propagation. We demonstrate its usefulness by focusing on the verification of bosonic quantum systems, and developing efficient verification protocols for large classes of target states using the two most common types of Gaussian measurements: homodyne and heterodyne detection. Our protocols are semi-device independent, designed to function with minimal assumptions about the quantum device being tested, and offer practical improvements over previous existing approaches. Overall, our work introduces efficient methods for verifying the correct preparation of complex quantum states, and has consequences for calibrating large quantum devices, witnessing quantum properties, supporting demonstrations of quantum computational speedups and enhancing trust in quantum computations.},
  archiveprefix = {arXiv},
  copyright     = {arXiv.org perpetual, non-exclusive license},
  eprint        = {2411.04688},
  file          = {:UC24_efficient - An Efficient Quantum State Verification Framework and Its Application to Bosonic Systems.pdf:PDF},
  keywords      = {Quantum Physics (quant-ph), FOS: Physical sciences},
  primaryclass  = {quant-ph},
  publisher     = {arXiv},
}

@article{FL11direct,
   title={Direct Fidelity Estimation from Few Pauli Measurements},
   volume={106},
   ISSN={1079-7114},
   DOI={10.1103/physrevlett.106.230501},
   number={23},
   journal={Physical Review Letters},
   publisher={American Physical Society (APS)},
   author={Flammia, Steven T. and Liu, Yi-Kai},
   year={2011},
   month=jun }

@Article{TMMM18resource,
  author        = {Takeuchi, Yuki and Mantri, Atul and Morimae, Tomoyuki and Mizutani, Akihiro and Fitzsimons, Joseph F.},
  journal       = {npj Quantum Information 5, 27 (2019)},
  title         = {Resource-efficient verification of quantum computing using Serfling's bound},
  year          = {2018},
  issn          = {2056-6387},
  month         = apr,
  number        = {1},
  volume        = {5},
  abstract      = {Verifying quantum states is central to certifying the correct operation of various quantum information processing tasks. In particular, in measurement-based quantum computing, checking whether correct graph states are generated is essential for reliable quantum computing. Several verification protocols for graph states have been proposed, but none of these are particularly resource efficient: multiple copies are required to extract a single state that is guaranteed to be close to the ideal one. The best protocol currently known requires $O(n^{15})$ copies of the state, where $n$ is the size of the graph state. In this paper, we construct a significantly more resource-efficient verification protocol for graph states that only requires $O(n^5\log{n})$ copies. The key idea is to employ Serfling's bound, which is a probability inequality in classical statistics. Utilizing Serfling's bound also enables us to generalize our protocol for qudit and continuous-variable graph states. Constructing a resource-efficient verification protocol for them is non-trivial. For example, the previous verification protocols for qubit graph states that use the quantum de Finetti theorem cannot be generalized to qudit and continuous-variable graph states without tremendously increasing the resource overhead. This is because the overhead caused by the quantum de Finetti theorem depends on the local dimension. On the other hand, in our protocol, the resource overhead is independent of the local dimension, and therefore generalizing to qudit or continuous-variable graph states does not increase the overhead. The flexibility of Serfling's bound also makes our protocol robust: our protocol accepts slightly noisy but still useful graph states.},
  archiveprefix = {arXiv},
  copyright     = {arXiv.org perpetual, non-exclusive license},
  date          = {2018-06-24},
  doi           = {10.1038/s41534-019-0142-2},
  eprint        = {1806.09138},
  file          = {:TMMM18resource - Resource Efficient Verification of Quantum Computing Using Serfling's Bound.pdf:PDF},
  keywords      = {Quantum Physics (quant-ph), FOS: Physical sciences},
  primaryclass  = {quant-ph},
  publisher     = {Springer Science and Business Media LLC},
}

@Article{FKMO24learning,
  author        = {Fawzi, Omar and Kueng, Richard and Markham, Damian and Oufkir, Aadil},
  journal       = {Nature Communications, 15, Article number: 9677 (2024)},
  title         = {Learning Properties of Quantum States Without the I.I.D. Assumption},
  year          = {2024},
  issn          = {2041-1723},
  month         = nov,
  number        = {1},
  volume        = {15},
  abstract      = {We develop a framework for learning properties of quantum states beyond the assumption of independent and identically distributed (i.i.d.) input states. We prove that, given any learning problem (under reasonable assumptions), an algorithm designed for i.i.d. input states can be adapted to handle input states of any nature, albeit at the expense of a polynomial increase in training data size (aka sample complexity). Importantly, this polynomial increase in sample complexity can be substantially improved to polylogarithmic if the learning algorithm in question only requires non-adaptive, single-copy measurements. Among other applications, this allows us to generalize the classical shadow framework to the non-i.i.d. setting while only incurring a comparatively small loss in sample efficiency. We use rigorous quantum information theory to prove our main results. In particular, we leverage permutation invariance and randomized single-copy measurements to derive a new quantum de Finetti theorem that mainly addresses measurement outcome statistics and, in turn, scales much more favorably in Hilbert space dimension.},
  archiveprefix = {arXiv},
  copyright     = {Creative Commons Attribution 4.0 International},
  date          = {2024-01-30},
  doi           = {10.1038/s41467-024-53765-6},
  eprint        = {2401.16922},
  file          = {:FKMO24learning - Learning Properties of Quantum States without the I.I.D. Assumption.pdf:PDF},
  keywords      = {Quantum Physics (quant-ph), Information Theory (cs.IT), Probability (math.PR), Statistics Theory (math.ST), FOS: Physical sciences, FOS: Computer and information sciences, FOS: Mathematics},
  primaryclass  = {quant-ph},
  publisher     = {Springer Science and Business Media LLC},
}

@Article{HPS24certifying,
  author        = {Huang, Hsin-Yuan and Preskill, John and Soleimanifar, Mehdi},
  title         = {Certifying almost all quantum states with few single-qubit measurements},
  year          = {2024},
  month         = apr,
  abstract      = {Certifying that an n-qubit state synthesized in the lab is close to the target state is a fundamental task in quantum information science. However, existing rigorous protocols either require deep quantum circuits or exponentially many single-qubit measurements. In this work, we prove that almost all n-qubit target states, including those with exponential circuit complexity, can be certified from only O(n^2) single-qubit measurements. This result is established by a new technique that relates certification to the mixing time of a random walk. Our protocol has applications for benchmarking quantum systems, for optimizing quantum circuits to generate a desired target state, and for learning and verifying neural networks, tensor networks, and various other representations of quantum states using only single-qubit measurements. We show that such verified representations can be used to efficiently predict highly non-local properties that would otherwise require an exponential number of measurements. We demonstrate these applications in numerical experiments with up to 120 qubits, and observe advantage over existing methods such as cross-entropy benchmarking (XEB).},
  archiveprefix = {arXiv},
  copyright     = {Creative Commons Attribution 4.0 International},
  doi           = {10.48550/ARXIV.2404.07281},
  eprint        = {2404.07281},
  file          = {:HPS24certifying - Certifying Almost All Quantum States with Few Single Qubit Measurements.pdf:PDF},
  keywords      = {Quantum Physics (quant-ph), Information Theory (cs.IT), Machine Learning (cs.LG), FOS: Physical sciences, FOS: Computer and information sciences},
  primaryclass  = {quant-ph},
  publisher     = {arXiv},
}

@Article{LZH23robust,
  author        = {Li, Zihao and Zhu, Huangjun and Hayashi, Masahito},
  journal       = {npj Quantum Inf. 9, 115 (2023)},
  title         = {Robust and efficient verification of graph states in blind measurement-based quantum computation},
  year          = {2023},
  issn          = {2056-6387},
  month         = nov,
  number        = {1},
  volume        = {9},
  abstract      = {Blind quantum computation (BQC) is a secure quantum computation method that protects the privacy of clients. Measurement-based quantum computation (MBQC) is a promising approach for realizing BQC. To obtain reliable results in blind MBQC, it is crucial to verify whether the resource graph states are accurately prepared in the adversarial scenario. However, previous verification protocols for this task are too resource consuming or noise susceptible to be applied in practice. Here, we propose a robust and efficient protocol for verifying arbitrary graph states with any prime local dimension in the adversarial scenario, which leads to a robust and efficient protocol for verifying the resource state in blind MBQC. Our protocol requires only local Pauli measurements and is thus easy to realize with current technologies. Nevertheless, it can achieve the optimal scaling behaviors with respect to the system size and the target precision as quantified by the infidelity and significance level, which has never been achieved before. Notably, our protocol can exponentially enhance the scaling behavior with the significance level.},
  archiveprefix = {arXiv},
  copyright     = {arXiv.org perpetual, non-exclusive license},
  date          = {2023-05-18},
  doi           = {10.1038/s41534-023-00783-9},
  eprint        = {2305.10742},
  file          = {:LZH23robust - Robust and Efficient Verification of Graph States in Blind Measurement Based Quantum Computation.pdf:PDF},
  keywords      = {Quantum Physics (quant-ph), FOS: Physical sciences},
  primaryclass  = {quant-ph},
  publisher     = {Springer Science and Business Media LLC},
}

@Article{ZH19efficient,
  author        = {Zhu, Huangjun and Hayashi, Masahito},
  journal       = {Phys. Rev. Lett. 123, 260504 (2019)},
  title         = {Efficient Verification of Pure Quantum States in the Adversarial Scenario},
  year          = {2019},
  issn          = {1079-7114},
  month         = dec,
  number        = {26},
  pages         = {260504},
  volume        = {123},
  abstract      = {Efficient verification of pure quantum states in the adversarial scenario is crucial to many applications in quantum information processing, such as blind measurement-based quantum computation and quantum networks. However, little is known about this topic so far. Here we establish a general framework for verifying pure quantum states in the adversarial scenario and clarify the resource cost. Moreover, we propose a simple and general recipe to constructing efficient verification protocols for the adversarial scenario from protocols for the nonadversarial scenario. With this recipe, arbitrary pure states can be verified in the adversarial scenario with almost the same efficiency as in the nonadversarial scenario. Many important quantum states can be verified in the adversarial scenario using local projective measurements with unprecedented high efficiencies.},
  archiveprefix = {arXiv},
  copyright     = {arXiv.org perpetual, non-exclusive license},
  date          = {2019-09-04},
  doi           = {10.1103/physrevlett.123.260504},
  eprint        = {1909.01900},
  file          = {:ZH19efficient - Efficient Verification of Pure Quantum States in the Adversarial Scenario.pdf:PDF},
  keywords      = {Quantum Physics (quant-ph), Mathematical Physics (math-ph), FOS: Physical sciences},
  primaryclass  = {quant-ph},
  publisher     = {American Physical Society (APS)},
}

@Article{TM17verification,
  author        = {Takeuchi, Yuki and Morimae, Tomoyuki},
  journal       = {Phys. Rev. X 8, 021060 (2018)},
  title         = {Verification of Many-Qubit States},
  year          = {2017},
  issn          = {2160-3308},
  month         = jun,
  number        = {2},
  pages         = {021060},
  volume        = {8},
  abstract      = {Verification is a task to check whether a given quantum state is close to an ideal state or not. In this paper, we show that a variety of many-qubit quantum states can be verified with only sequential single-qubit measurements of Pauli operators. First, we introduce a protocol for verifying ground states of Hamiltonians. We next explain how to verify quantum states generated by a certain class of quantum circuits. We finally propose an adaptive test of stabilizers that enables the verification of all polynomial-time-generated hypergraph states, which include output states of the Bremner-Montanaro-Shepherd-type instantaneous quantum polynomial time (IQP) circuits. Importantly, we do not make any assumption that the identically and independently distributed copies of the same states are given: Our protocols work even if some highly complicated entanglement is created among copies in any artificial way. As applications, we consider the verification of the quantum computational supremacy demonstration with IQP models, and verifiable blind quantum computing.},
  archiveprefix = {arXiv},
  copyright     = {arXiv.org perpetual, non-exclusive license},
  date          = {2017-09-22},
  doi           = {10.1103/physrevx.8.021060},
  eprint        = {1709.07575},
  file          = {:TM17verification - Verification of Many Qubit States.pdf:PDF},
  keywords      = {Quantum Physics (quant-ph), FOS: Physical sciences},
  primaryclass  = {quant-ph},
  publisher     = {American Physical Society (APS)},
}

@Article{ZH18efficient,
  author        = {Zhu, Huangjun and Hayashi, Masahito},
  journal       = {Phys. Rev. Applied 12, 054047 (2019)},
  title         = {Efficient Verification of Hypergraph States},
  year          = {2018},
  issn          = {2331-7019},
  month         = nov,
  number        = {5},
  pages         = {054047},
  volume        = {12},
  abstract      = {Graph states and hypergraph states are of wide interest in quantum information processing and foundational studies. Efficient verification of these states is a key to various applications. Here we propose a simple method for verifying hypergraph states which requires only two distinct Pauli measurements for each party, yet its efficiency is comparable to the best strategy based on entangling measurements. For a given state, the overhead is bounded by the chromatic number and degree of the underlying hypergraph. Our protocol is dramatically more efficient than all previous protocols based on local measurements, including tomography and direct fidelity estimation. It enables the verification of hypergraph states and genuine multipartite entanglement of thousands of qubits. The protocol can also be generalized to the adversarial scenario, while achieving almost the same efficiency. This merit is particularly appealing to demonstrating blind measurement-based quantum computation and quantum supremacy.},
  archiveprefix = {arXiv},
  copyright     = {arXiv.org perpetual, non-exclusive license},
  date          = {2018-06-14},
  doi           = {10.1103/physrevapplied.12.054047},
  eprint        = {1806.05565},
  file          = {:http\://arxiv.org/pdf/1806.05565v4:PDF},
  keywords      = {Quantum Physics (quant-ph), Mathematical Physics (math-ph), FOS: Physical sciences},
  primaryclass  = {quant-ph},
  publisher     = {American Physical Society (APS)},
}

@Article{TMH18quantum,
  author        = {Takeuchi, Yuki and Morimae, Tomoyuki and Hayashi, Masahito},
  journal       = {Sci. Rep. 9, 13585 (2019)},
  title         = {Quantum computational universality of hypergraph states with Pauli-X and Z basis measurements},
  year          = {2018},
  month         = sep,
  abstract      = {Measurement-based quantum computing is one of the most promising quantum computing models. Although various universal resource states have been proposed so far, it was open whether only two Pauli bases are enough for both of universal measurement-based quantum computing and its verification. In this paper, we construct a universal hypergraph state that only requires $X$ and $Z$-basis measurements for universal measurement-based quantum computing. We also show that universal measurement-based quantum computing on our hypergraph state can be verified in polynomial time using only $X$ and $Z$-basis measurements. Furthermore, in order to demonstrate an advantage of our hypergraph state, we construct a verifiable blind quantum computing protocol that requires only $X$ and $Z$-basis measurements for the client.},
  archiveprefix = {arXiv},
  copyright     = {arXiv.org perpetual, non-exclusive license},
  doi           = {10.48550/ARXIV.1809.07552},
  eprint        = {1809.07552},
  file          = {:TMH18quantum - Quantum Computational Universality of Hypergraph States with Pauli X and Z Basis Measurements.pdf:PDF},
  keywords      = {Quantum Physics (quant-ph), FOS: Physical sciences},
  primaryclass  = {quant-ph},
  publisher     = {arXiv},
}

@Article{GHO25few,
  author        = {Gupta, Meghal and He, William and O'Donnell, Ryan},
  title         = {Few Single-Qubit Measurements Suffice to Certify Any Quantum State},
  year          = {2025},
  month         = jun,
  abstract      = {A fundamental task in quantum information science is state certification: testing whether a lab-prepared $n$-qubit state is close to a given hypothesis state. In this work, we show that every pure hypothesis state can be certified using only $O(n^2)$ single-qubit measurements applied to $O(n)$ copies of the lab state. Prior to our work, it was not known whether even subexponentially many single-qubit measurements could suffice to certify arbitrary states. This resolves the main open question of Huang, Preskill, and Soleimanifar (FOCS 2024, QIP 2024). Our algorithm also showcases the power of adaptive measurements: within each copy of the lab state, previous measurement outcomes dictate how subsequent qubit measurements are made. We show that the adaptivity is necessary, by proving an exponential lower bound on the number of copies needed for any nonadaptive single-qubit measurement algorithm.},
  archiveprefix = {arXiv},
  copyright     = {Creative Commons Attribution 4.0 International},
  doi           = {10.48550/ARXIV.2506.11355},
  eprint        = {2506.11355},
  file          = {:GHO25few - Few Single Qubit Measurements Suffice to Certify Any Quantum State.pdf:PDF},
  keywords      = {Quantum Physics (quant-ph), Data Structures and Algorithms (cs.DS), FOS: Physical sciences, FOS: Computer and information sciences},
  primaryclass  = {quant-ph},
  publisher     = {arXiv},
}

@Article{RB01one,
  author    = {Raussendorf, Robert and Briegel, Hans J.},
  journal   = {Phys. Rev. Lett.},
  title     = {A One-Way Quantum Computer},
  year      = {2001},
  month     = {May},
  pages     = {5188--5191},
  volume    = {86},
  doi       = {10.1103/PhysRevLett.86.5188},
  issue     = {22},
  numpages  = {0},
  publisher = {American Physical Society},
  source    = {Luka Music, 2021-09-16},
  url       = {http://link.aps.org/doi/10.1103/PhysRevLett.86.5188},
}

@Article{DKP07measurement,
  author     = {Danos, Vincent and Kashefi, Elham and Panangaden, Prakash},
  journal    = {J. ACM},
  title      = {The Measurement Calculus},
  year       = {2007},
  issn       = {0004-5411},
  month      = apr,
  number     = {2},
  volume     = {54},
  acmid      = {1219096},
  address    = {New York, NY, USA},
  articleno  = {8},
  doi        = {10.1145/1219092.1219096},
  file       = {:DKP07_measurement-calculus - The Measurement Calculus.pdf:PDF;:DKP07_measurement-calculus - The Measurement Calculus.pdf:PDF},
  issue_date = {April 2007},
  keywords   = {Models for quantum computing, measurement-based quantum computing, normalization, quantum programming languages, teleportation-based quantum computing, term rewriting},
  publisher  = {ACM},
  source     = {Luka Music, 2021-09-16},
  url        = {http://doi.acm.org/10.1145/1219092.1219096},
}

@Article{FG97cryptographic,
  author        = {Fuchs, Christopher A. and van de Graaf, Jeroen},
  title         = {Cryptographic Distinguishability Measures for Quantum Mechanical States},
  year          = {1997},
  month         = dec,
  abstract      = {This paper, mostly expository in nature, surveys four measures of distinguishability for quantum-mechanical states. This is done from the point of view of the cryptographer with a particular eye on applications in quantum cryptography. Each of the measures considered is rooted in an analogous classical measure of distinguishability for probability distributions: namely, the probability of an identification error, the Kolmogorov distance, the Bhattacharyya coefficient, and the Shannon distinguishability (as defined through mutual information). These measures have a long history of use in statistical pattern recognition and classical cryptography. We obtain several inequalities that relate the quantum distinguishability measures to each other, one of which may be crucial for proving the security of quantum cryptographic key distribution. In another vein, these measures and their connecting inequalities are used to define a single notion of cryptographic exponential indistinguishability for two families of quantum states. This is a tool that may prove useful in the analysis of various quantum cryptographic protocols.},
  archiveprefix = {arXiv},
  copyright     = {Assumed arXiv.org perpetual, non-exclusive license to distribute this article for submissions made before January 2004},
  doi           = {10.48550/ARXIV.QUANT-PH/9712042},
  eprint        = {quant-ph/9712042},
  file          = {:FG97cryptographic - Cryptographic Distinguishability Measures for Quantum Mechanical States.pdf:PDF:http\://arxiv.org/pdf/quant-ph/9712042v2},
  keywords      = {Quantum Physics (quant-ph), FOS: Physical sciences},
  primaryclass  = {quant-ph},
  publisher     = {arXiv},
}

@Article{YJS18quantum,
  author        = {Yoganathan, Mithuna and Jozsa, Richard and Strelchuk, Sergii},
  journal       = {Proceedings of the Royal Society A: Mathematical, Physical and Engineering Sciences},
  title         = {Quantum advantage of unitary Clifford circuits with magic state inputs},
  year          = {2018},
  issn          = {1471-2946},
  month         = may,
  number        = {2225},
  pages         = {20180427},
  volume        = {475},
  abstract      = {We study the computational power of unitary Clifford circuits with solely magic state inputs (CM circuits), supplemented by classical efficient computation. We show that CM circuits are hard to classically simulate up to multiplicative error (assuming PH non-collapse), and also up to additive error under plausible average-case hardness conjectures. Unlike other such known classes, a broad variety of possible conjectures apply. Along the way we give an extension of the Gottesman-Knill theorem that applies to universal computation, showing that for Clifford circuits with joint stabiliser and non-stabiliser inputs, the stabiliser part can be eliminated in favour of classical simulation, leaving a Clifford circuit on only the non-stabiliser part. Finally we discuss implementational advantages of CM circuits.},
  archiveprefix = {arXiv},
  copyright     = {arXiv.org perpetual, non-exclusive license},
  date          = {2018-06-08},
  doi           = {10.1098/rspa.2018.0427},
  eprint        = {1806.03200},
  file          = {:YJS18_quantum - Quantum Advantage of Unitary Clifford Circuits with Magic State Inputs.pdf:PDF:http\://arxiv.org/pdf/1806.03200v2},
  keywords      = {Quantum Physics (quant-ph), FOS: Physical sciences},
  primaryclass  = {quant-ph},
  publisher     = {The Royal Society},
}

@InProceedings{S94algorithms,
  author    = {Shor, P. W.},
  booktitle = {Proceedings of the 35th Annual Symposium on Foundations of Computer Science},
  title     = {Algorithms for Quantum Computation: Discrete Logarithms and Factoring},
  year      = {1994},
  address   = {USA},
  pages     = {124–134},
  publisher = {IEEE Computer Society},
  series    = {SFCS '94},
  abstract  = {A computer is generally considered to be a universal
                  computational device; i.e., it is believed able to
                  simulate any physical computational device with a
                  cost in computation time of at most a polynomial
                  factor: It is not clear whether this is still true
                  when quantum mechanics is taken into
                  consideration. Several researchers, starting with
                  David Deutsch, have developed models for quantum
                  mechanical computers and have investigated their
                  computational properties. This paper gives Las Vegas
                  algorithms for finding discrete logarithms and
                  factoring integers on a quantum computer that take a
                  number of steps which is polynomial in the input
                  size, e.g., the number of digits of the integer to
                  be factored. These two problems are generally
                  considered hard on a classical computer and have
                  been used as the basis of several proposed
                  cryptosystems. We thus give the first examples of
                  quantum cryptanalysis.},
  doi       = {10.1109/SFCS.1994.365700},
  isbn      = {0818665807},
  keywords  = {factoring, quantum computation algorithms, cryptosystems, physical computational device, quantum computer, polynomial factor, discrete logarithms, Las Vegas algorithms},
  numpages  = {11},
  source    = {Luka Music, 2021-09-16},
  url       = {https://doi.org/10.1109/SFCS.1994.365700},
}

@Article{F82simulating,
  author  = {Feynman, R. P.},
  journal = {Int. J. of Theor. Phys.},
  title   = {Simulating Physics with computers},
  year    = {1982},
  pages   = {467},
  volume  = {21},
}

@Article{F86quantum,
  author        = {Feynman, R. P.},
  journal       = {Found. Phys.},
  title         = {Quantum-Mechanical Computers},
  year          = {1986},
  pages         = {507-531},
  volume        = {16},
  date-modified = {2006-01-04 12:54:42 +1000},
}

@Article{D85quantum,
  author  = {Deutsch, D.},
  journal = {Proc. R. Soc. Lond. A},
  title   = {Quantum theory, the {Church-Turing} principle and the universal quantum computer},
  year    = {1985},
  pages   = {97-117},
  volume  = {400},
}

@InProceedings{G96fast,
  author    = {Grover, L.},
  booktitle = {Proc. 28th Annual ACM Symposium on the Theory of Computation},
  title     = {A Fast Quantum Mechanical Algorithm for Database Search},
  year      = {1996},
  address   = {New York, NY},
  pages     = {212-219},
  publisher = {ACM Press, New York},
  eprint    = {quant-ph/9605043},
  source    = {E. Galvao 27.01.04},
}

@Article{P18quantum,
  author  = {Preskill, John},
  journal = {Quantum},
  title   = {Quantum computing in the NISQ era and beyond},
  year    = {2018},
  pages   = {79},
  volume  = {2},
}

@Article{EP25mind,
  author        = {Eisert, Jens and Preskill, John},
  title         = {Mind the gaps: The fraught road to quantum advantage},
  year          = {2025},
  month         = oct,
  abstract      = {Quantum computing is advancing rapidly, yet substantial gaps separate today's noisy intermediate-scale quantum (NISQ) devices from tomorrow's fault-tolerant application-scale (FASQ) machines. We identify four related hurdles along the road ahead: (i) from error mitigation to active error detection and correction, (ii) from rudimentary error correction to scalable fault tolerance, (iii) from early heuristics to mature, verifiable algorithms, and (iv) from exploratory simulators to credible advantage in quantum simulation. Targeting these transitions will accelerate progress toward broadly useful quantum computing.},
  archiveprefix = {arXiv},
  copyright     = {Creative Commons Attribution 4.0 International},
  doi           = {10.48550/ARXIV.2510.19928},
  eprint        = {2510.19928},
  file          = {:EP25_mind - Mind the Gaps_ the Fraught Road to Quantum Advantage.pdf:PDF:http\://arxiv.org/pdf/2510.19928v1},
  keywords      = {Quantum Physics (quant-ph), Other Condensed Matter (cond-mat.other), FOS: Physical sciences},
  primaryclass  = {quant-ph},
  publisher     = {arXiv},
}

@Article{EHWR20quantum,
  author   = {Eisert, Jens and Hangleiter, Dominik and Walk, Nathan and Roth, Ingo and Markham, Damian and Parekh, Rhea and Chabaud, Ulysse and Kashefi, Elham},
  journal  = {Nature Reviews Physics},
  title    = {Quantum certification and benchmarking},
  year     = {2020},
  issn     = {2522-5820},
  month    = {Jul},
  number   = {7},
  pages    = {382-390},
  volume   = {2},
  abstract = {With the rapid development of quantum technologies,
                  a pressing need has emerged for a wide array of
                  tools for the certification and characterization of
                  quantum devices. Such tools are critical because the
                  powerful applications of quantum information science
                  will only be realized if stringent levels of
                  precision of components can be reached and their
                  functioning guaranteed. This Technical Review
                  provides a brief overview of the known
                  characterization methods for certification,
                  benchmarking and tomographic reconstruction of
                  quantum states and processes, and outlines their
                  applications in quantum computing, simulation and
                  communication.},
  day      = {01},
  doi      = {10.1038/s42254-020-0186-4},
  eprint   = {https://arxiv.org/abs/1910.06343},
  file     = {:EHWR+20_quantum - Quantum Certification and Benchmarking.pdf:PDF},
  source   = {Luka Music, 2021-09-16},
  url      = {https://doi.org/10.1038/s42254-020-0186-4},
}

@Article{HCMP25vast,
  author        = {Huang, Hsin-Yuan and Choi, Soonwon and McClean, Jarrod R. and Preskill, John},
  title         = {The vast world of quantum advantage},
  year          = {2025},
  month         = aug,
  abstract      = {The quest to identify quantum advantages lies at the heart of quantum technology. While quantum devices promise extraordinary capabilities, from exponential computational speedups to unprecedented measurement precision, distinguishing genuine advantages from mere illusions remains a formidable challenge. In this endeavor, quantum theorists are like prophets attempting to foretell the future, yet the boundary between visionary insight and unfounded fantasy is perilously thin. In this perspective, we examine our mathematical tools for navigating the vast world of quantum advantages across computation, learning, sensing, and communication. We explore five keystone properties: predictability, typicality, robustness, verifiability, and usefulness that define an ideal quantum advantage, and envision what new quantum advantages could arise in a future with ubiquitous quantum technology. We prove that some quantum advantages are inherently unpredictable using classical resources alone, suggesting a landscape far richer than what we can currently foresee. While mathematical rigor remains our indispensable guide, the ultimate power of quantum technologies may emerge from advantages we cannot yet conceive.},
  archiveprefix = {arXiv},
  copyright     = {Creative Commons Attribution 4.0 International},
  doi           = {10.48550/ARXIV.2508.05720},
  eprint        = {2508.05720},
  file          = {:HCMP25vast - The Vast World of Quantum Advantage.pdf:PDF:http\://arxiv.org/pdf/2508.05720v2},
  keywords      = {Quantum Physics (quant-ph), Computational Complexity (cs.CC), Information Theory (cs.IT), FOS: Physical sciences, FOS: Computer and information sciences},
  primaryclass  = {quant-ph},
  publisher     = {arXiv},
}

@Article{NLDC22beating,
  author        = {Ni, Zhongchu and Li, Sai and Deng, Xiaowei and Cai, Yanyan and Zhang, Libo and Wang, Weiting and Yang, Zhen-Biao and Yu, Haifeng and Yan, Fei and Liu, Song and Zou, Chang-Ling and Sun, Luyan and Zheng, Shi-Biao and Xu, Yuan and Yu, Dapeng},
  journal       = {Nature 616, 56-60 (2023)},
  title         = {Beating the break-even point with a discrete-variable-encoded logical qubit},
  year          = {2022},
  issn          = {1476-4687},
  month         = mar,
  number        = {7955},
  pages         = {56--60},
  volume        = {616},
  abstract      = {Quantum error correction (QEC) aims to protect logical qubits from noises by utilizing the redundancy of a large Hilbert space, where an error, once it occurs, can be detected and corrected in real time. In most QEC codes, a logical qubit is encoded in some discrete variables, e.g., photon numbers. Such encoding schemes make the codewords orthogonal, so that the encoded quantum information can be unambiguously extracted after processing. Based on such discrete-variable encodings, repetitive QEC demonstrations have been reported on various platforms, but there the lifetime of the encoded logical qubit is still shorter than that of the best available physical qubit in the entire system, which represents a break-even point that needs to be surpassed for any QEC code to be of practical use. Here we demonstrate a QEC procedure with a logical qubit encoded in photon-number states of a microwave cavity, dispersively coupled to an ancilla superconducting qubit. By applying a pulse featuring a tailored frequency comb to the ancilla, we can repetitively extract the error syndrome with high fidelity and perform error correction with feedback control accordingly, thereby exceeding the break-even point by about 16% lifetime enhancement. Our work illustrates the potential of the hardware-efficient discrete-variable QEC codes towards a reliable quantum information processor.},
  archiveprefix = {arXiv},
  copyright     = {arXiv.org perpetual, non-exclusive license},
  date          = {2022-11-17},
  doi           = {10.1038/s41586-023-05784-4},
  eprint        = {2211.09319},
  file          = {:NLDC+22beating - Beating the Break Even Point with a Discrete Variable Encoded Logical Qubit.pdf:PDF:http\://arxiv.org/pdf/2211.09319v2},
  keywords      = {Quantum Physics (quant-ph), FOS: Physical sciences},
  primaryclass  = {quant-ph},
  publisher     = {Springer Science and Business Media LLC},
}

@Article{AAAA25constructive,
  author        = {Abanin, Dmitry A. 
                   et al.},
  title         = {Constructive interference at the edge of quantum ergodic dynamics},
  year          = {2025},
  month         = jun,
  abstract      = {Quantum observables in the form of few-point correlators are the key to characterizing the dynamics of quantum many-body systems. In dynamics with fast entanglement generation, quantum observables generally become insensitive to the details of the underlying dynamics at long times due to the effects of scrambling. In experimental systems, repeated time-reversal protocols have been successfully implemented to restore sensitivities of quantum observables. Using a 103-qubit superconducting quantum processor, we characterize ergodic dynamics using the second-order out-of-time-order correlators, OTOC$^{(2)}$. In contrast to dynamics without time reversal, OTOC$^{(2)}$ are observed to remain sensitive to the underlying dynamics at long time scales. Furthermore, by inserting Pauli operators during quantum evolution and randomizing the phases of Pauli strings in the Heisenberg picture, we observe substantial changes in OTOC$^{(2)}$ values. This indicates that OTOC$^{(2)}$ is dominated by constructive interference between Pauli strings that form large loops in configuration space. The observed interference mechanism endows OTOC$^{(2)}$ with a high degree of classical simulation complexity, which culminates in a set of large-scale OTOC$^{(2)}$ measurements exceeding the simulation capacity of known classical algorithms. Further supported by an example of Hamiltonian learning through OTOC$^{(2)}$, our results indicate a viable path to practical quantum advantage.},
  archiveprefix = {arXiv},
  copyright     = {Creative Commons Attribution 4.0 International},
  doi           = {10.48550/ARXIV.2506.10191},
  eprint        = {2506.10191},
  file          = {:AAAA+25constructive - Constructive Interference at the Edge of Quantum Ergodic Dynamics.pdf:PDF:http\://arxiv.org/pdf/2506.10191v1},
  keywords      = {Quantum Physics (quant-ph), Other Condensed Matter (cond-mat.other), Applied Physics (physics.app-ph), FOS: Physical sciences},
  primaryclass  = {quant-ph},
  publisher     = {arXiv},
}

@Article{PYBB24benchmarking,
  author        = {Proctor, Timothy and Young, Kevin and Baczewski, Andrew D. and Blume-Kohout, Robin},
  title         = {Benchmarking quantum computers},
  year          = {2024},
  month         = jul,
  abstract      = {The rapid pace of development in quantum computing technology has sparked a proliferation of benchmarks for assessing the performance of quantum computing hardware and software. Good benchmarks empower scientists, engineers, programmers, and users to understand a computing system's power, but bad benchmarks can misdirect research and inhibit progress. In this Perspective, we survey the science of quantum computer benchmarking. We discuss the role of benchmarks and benchmarking, and how good benchmarks can drive and measure progress towards the long-term goal of useful quantum computations, i.e., "quantum utility". We explain how different kinds of benchmark quantify the performance of different parts of a quantum computer, we survey existing benchmarks, critically discuss recent trends in benchmarking, and highlight important open research questions in this field.},
  archiveprefix = {arXiv},
  copyright     = {arXiv.org perpetual, non-exclusive license},
  doi           = {10.48550/ARXIV.2407.08828},
  eprint        = {2407.08828},
  file          = {:PYBB24benchmarking - Benchmarking Quantum Computers.pdf:PDF:http\://arxiv.org/pdf/2407.08828v1},
  keywords      = {Quantum Physics (quant-ph), Emerging Technologies (cs.ET), Performance (cs.PF), FOS: Physical sciences, FOS: Computer and information sciences},
  primaryclass  = {quant-ph},
  publisher     = {arXiv},
}

@Article{BPY25quantum,
  author        = {Blume-Kohout, Robin and Proctor, Timothy and Young, Kevin},
  title         = {Quantum Characterization, Verification, and Validation},
  year          = {2025},
  month         = mar,
  abstract      = {Quantum characterization, verification, and validation (QCVV) is a set of techniques to probe, describe, and assess the behavior of quantum bits (qubits), quantum information-processing registers, and quantum computers. QCVV protocols probe and describe the effects of unwanted decoherence so that it can be eliminated or mitigated. They can be usefully divided into characterization techniques that estimate predictive models for a device's behavior from data, and benchmarking techniques that assess overall performance of a device. In this introductory article, we briefly summarize the history of QCVV, introduce the mathematical models and metrics upon which it relies, and then summarize the foundational fields of tomography, randomized benchmarking, and holistic benchmarks. We conclude with brief descriptions of (and references to) advanced topics including gate set tomography, phase estimation, Pauli noise learning, characterization of mid-circuit measurements and non-Markovianity, classical shadows, verification and certification, and logical qubit assessment.},
  archiveprefix = {arXiv},
  copyright     = {arXiv.org perpetual, non-exclusive license},
  doi           = {10.48550/ARXIV.2503.16383},
  eprint        = {2503.16383},
  file          = {:BPY25quantum - Quantum Characterization, Verification, and Validation.pdf:PDF:http\://arxiv.org/pdf/2503.16383v1},
  keywords      = {Quantum Physics (quant-ph), FOS: Physical sciences},
  primaryclass  = {quant-ph},
  publisher     = {arXiv},
}

@Article{GKK19verification,
  author   = {Gheorghiu, Alexandru and Kapourniotis, Theodoros and Kashefi, Elham},
  journal  = {Theory of Computing Systems},
  title    = {Verification of Quantum Computation: An Overview of Existing Approaches},
  year     = {2019},
  issn     = {1433-0490},
  month    = {May},
  number   = {4},
  pages    = {715-808},
  volume   = {63},
  abstract = {Quantum computers promise to efficiently solve not
                  only problems believed to be intractable for
                  classical computers, but also problems for which
                  verifying the solution is also considered
                  intractable. This raises the question of how one can
                  check whether quantum computers are indeed producing
                  correct results. This task, known as quantum
                  verification, has been highlighted as a significant
                  challenge on the road to scalable quantum computing
                  technology. We review the most significant
                  approaches to quantum verification and compare them
                  in terms of structure, complexity and required
                  resources. We also comment on the use of
                  cryptographic techniques which, for many of the
                  presented protocols, has proven extremely useful in
                  performing verification. Finally, we discuss issues
                  related to fault tolerance, experimental
                  implementations and the outlook for future
                  protocols.},
  day      = {01},
  doi      = {10.1007/s00224-018-9872-3},
  file     = {:GKK19_verification - Verification of Quantum Computation_ an Overview of Existing Approaches.pdf:PDF;:GKK19_verification - Verification of Quantum Computation_ an Overview of Existing Approaches.pdf:PDF;GKK19:verification.pdf:},
  source   = {Luka Music, 2021-09-16},
  url      = {https://link.springer.com/content/pdf/10.1007/s00224-018-9872-3.pdf},
}

@Article{BFNV18quantum,
  author        = {Bouland, Adam and Fefferman, Bill and Nirkhe, Chinmay and Vazirani, Umesh},
  journal       = {Nature Physics},
  title         = {Quantum Supremacy and the Complexity of Random Circuit Sampling},
  year          = {2018},
  issn          = {1745-2481},
  month         = oct,
  number        = {2},
  pages         = {159--163},
  volume        = {15},
  abstract      = {A critical milestone on the path to useful quantum computers is quantum supremacy - a demonstration of a quantum computation that is prohibitively hard for classical computers. A leading near-term candidate, put forth by the Google/UCSB team, is sampling from the probability distributions of randomly chosen quantum circuits, which we call Random Circuit Sampling (RCS). In this paper we study both the hardness and verification of RCS. While RCS was defined with experimental realization in mind, we show complexity theoretic evidence of hardness that is on par with the strongest theoretical proposals for supremacy. Specifically, we show that RCS satisfies an average-case hardness condition - computing output probabilities of typical quantum circuits is as hard as computing them in the worst-case, and therefore #P-hard. Our reduction exploits the polynomial structure in the output amplitudes of random quantum circuits, enabled by the Feynman path integral. In addition, it follows from known results that RCS satisfies an anti-concentration property, making it the first supremacy proposal with both average-case hardness and anti-concentration.},
  archiveprefix = {arXiv},
  copyright     = {arXiv.org perpetual, non-exclusive license},
  date          = {2018-03-12},
  doi           = {10.1038/s41567-018-0318-2},
  eprint        = {1803.04402},
  file          = {:BFNV18quantum - Quantum Supremacy and the Complexity of Random Circuit Sampling.pdf:PDF:http\://arxiv.org/pdf/1803.04402v1},
  keywords      = {Quantum Physics (quant-ph), Computational Complexity (cs.CC), FOS: Physical sciences, FOS: Computer and information sciences},
  primaryclass  = {quant-ph},
  publisher     = {Springer Science and Business Media LLC},
}

@Article{A04quantuma,
  author        = {Aaronson, Scott},
  title         = {Quantum Computing, Postselection, and Probabilistic Polynomial-Time},
  year          = {2004},
  month         = dec,
  abstract      = {I study the class of problems efficiently solvable by a quantum computer, given the ability to "postselect" on the outcomes of measurements. I prove that this class coincides with a classical complexity class called PP, or Probabilistic Polynomial-Time. Using this result, I show that several simple changes to the axioms of quantum mechanics would let us solve PP-complete problems efficiently. The result also implies, as an easy corollary, a celebrated theorem of Beigel, Reingold, and Spielman that PP is closed under intersection, as well as a generalization of that theorem due to Fortnow and Reingold. This illustrates that quantum computing can yield new and simpler proofs of major results about classical computation.},
  archiveprefix = {arXiv},
  copyright     = {Assumed arXiv.org perpetual, non-exclusive license to distribute this article for submissions made before January 2004},
  doi           = {10.48550/ARXIV.QUANT-PH/0412187},
  eprint        = {quant-ph/0412187},
  file          = {:A04quantuma - Quantum Computing, Postselection, and Probabilistic Polynomial Time.pdf:PDF:http\://arxiv.org/pdf/quant-ph/0412187v1},
  keywords      = {Quantum Physics (quant-ph), Computational Complexity (cs.CC), FOS: Physical sciences, FOS: Computer and information sciences},
  primaryclass  = {quant-ph},
  publisher     = {arXiv},
}

@Article{DBMA25breaking,
  author        = {Dasu, Shival and Burton, Simon and Mayer, Karl and Amaro, David and Gerber, Justin A. and Gilmore, Kevin and Gresh, Dan and DelVento, Davide and Potter, Andrew C. and Hayes, David},
  title         = {Breaking even with magic: demonstration of a high-fidelity logical non-Clifford gate},
  year          = {2025},
  month         = jun,
  abstract      = {Encoding quantum information to protect it from errors is essential for performing large-scale quantum computations. Performing a universal set of quantum gates on encoded states demands a potentially large resource overhead and minimizing this overhead is key for the practical development of large-scale fault-tolerant quantum computers. We propose and experimentally implement a magic-state preparation protocol to fault-tolerantly prepare a pair of logical magic states in a [[6,2,2]] quantum error-detecting code using only eight physical qubits. Implementing this protocol on H1-1, a 20 qubit trapped-ion quantum processor, we prepare magic states with experimental infidelity $7^{+3}_{-1}\times 10^{-5}$ with a $14.8^{+1}_{-1}\%$ discard rate and use these to perform a fault-tolerant non-Clifford gate, the controlled-Hadamard (CH), with logical infidelity $\leq 2.3^{+9}_{-9}\times 10^{-4}$. Notably, this significantly outperforms the unencoded physical CH infidelity of $10^{-3}$. Through circuit-level stabilizer simulations, we show that this protocol can be self-concatenated to produce extremely high-fidelity magic states with low space-time overhead in a [[36,4,4]] quantum error correcting code, with logical error rates of $6\times 10^{-10}$ ($5\times 10^{-14}$) at two-qubit error rate of $10^{-3}$ ($10^{-4}$) respectively.},
  archiveprefix = {arXiv},
  copyright     = {Creative Commons Attribution 4.0 International},
  doi           = {10.48550/ARXIV.2506.14688},
  eprint        = {2506.14688},
  file          = {:http\://arxiv.org/pdf/2506.14688v1:PDF},
  keywords      = {Quantum Physics (quant-ph), FOS: Physical sciences},
  primaryclass  = {quant-ph},
  publisher     = {arXiv},
}

@Article{GSJ24magic,
  author        = {Gidney, Craig and Shutty, Noah and Jones, Cody},
  title         = {Magic state cultivation: growing T states as cheap as CNOT gates},
  year          = {2024},
  month         = sep,
  abstract      = {We refine ideas from Knill 1996, Jones 2016, Chamberland 2020, Gidney 2023+2024, Bombin 2024, and Hirano 2024 to efficiently prepare good $|T\rangle$ states. We call our construction "magic state cultivation" because it gradually grows the size and reliability of one state. Cultivation fits inside a surface code patch and uses roughly the same number of physical gates as a lattice surgery CNOT gate of equivalent reliability. We estimate the infidelity of cultivation (from injection to idling at distance 15) using a mix of state vector simulation, stabilizer simulation, error enumeration, and Monte Carlo sampling. Compared to prior work, cultivation uses an order of magnitude fewer qubit-rounds to reach logical error rates as low as $2 \cdot 10^{-9}$ when subjected to $10^{-3}$ uniform depolarizing circuit noise. Halving the circuit noise to $5 \cdot 10^{-4}$ improves the achievable logical error rate to $4 \cdot 10^{-11}$. Cultivation's efficiency and strong response to improvements in physical noise suggest that further magic state distillation may never be needed in practice.},
  archiveprefix = {arXiv},
  copyright     = {Creative Commons Attribution 4.0 International},
  doi           = {10.48550/ARXIV.2409.17595},
  eprint        = {2409.17595},
  file          = {:GSJ24magic - Magic State Cultivation_ Growing T States As Cheap As CNOT Gates.pdf:PDF:http\://arxiv.org/pdf/2409.17595v1},
  keywords      = {Quantum Physics (quant-ph), FOS: Physical sciences},
  primaryclass  = {quant-ph},
  publisher     = {arXiv},
}

@Article{G23cleaner,
  author        = {Gidney, Craig},
  title         = {Cleaner magic states with hook injection},
  year          = {2023},
  month         = feb,
  abstract      = {In this paper, I show how an intentional hook error mechanism can be used as a control knob for injecting magic states into surface codes. The limitation, and benefit, of this approach is that it can only inject states in the XY or YZ plane of the Bloch sphere. This increases fidelity, because perturbations out of the target plane can be detected as errors. I use Monte Carlo sampling to show that this technique outperforms previous injection techniques, achieving lower error rates at smaller spacetime cost under digitized circuit noise.},
  archiveprefix = {arXiv},
  copyright     = {Creative Commons Attribution 4.0 International},
  doi           = {10.48550/ARXIV.2302.12292},
  eprint        = {2302.12292},
  file          = {:G23cleaner - Cleaner Magic States with Hook Injection.pdf:PDF:http\://arxiv.org/pdf/2302.12292v1},
  keywords      = {Quantum Physics (quant-ph), FOS: Physical sciences},
  primaryclass  = {quant-ph},
  publisher     = {arXiv},
}

@Article{AAAA22suppressing,
  author        = {Acharya, Rajeev et al.},
  journal       = {Nature 614 (2023) 678-681},
  title         = {Suppressing quantum errors by scaling a surface code logical qubit},
  year          = {2022},
  issn          = {1476-4687},
  month         = feb,
  number        = {7949},
  pages         = {676--681},
  volume        = {614},
  abstract      = {Practical quantum computing will require error rates that are well below what is achievable with physical qubits. Quantum error correction offers a path to algorithmically-relevant error rates by encoding logical qubits within many physical qubits, where increasing the number of physical qubits enhances protection against physical errors. However, introducing more qubits also increases the number of error sources, so the density of errors must be sufficiently low in order for logical performance to improve with increasing code size. Here, we report the measurement of logical qubit performance scaling across multiple code sizes, and demonstrate that our system of superconducting qubits has sufficient performance to overcome the additional errors from increasing qubit number. We find our distance-5 surface code logical qubit modestly outperforms an ensemble of distance-3 logical qubits on average, both in terms of logical error probability over 25 cycles and logical error per cycle ($2.914\%\pm 0.016\%$ compared to $3.028\%\pm 0.023\%$). To investigate damaging, low-probability error sources, we run a distance-25 repetition code and observe a $1.7\times10^{-6}$ logical error per round floor set by a single high-energy event ($1.6\times10^{-7}$ when excluding this event). We are able to accurately model our experiment, and from this model we can extract error budgets that highlight the biggest challenges for future systems. These results mark the first experimental demonstration where quantum error correction begins to improve performance with increasing qubit number, illuminating the path to reaching the logical error rates required for computation.},
  archiveprefix = {arXiv},
  copyright     = {Creative Commons Attribution 4.0 International},
  date          = {2022-07-13},
  doi           = {10.1038/s41586-022-05434-1},
  eprint        = {2207.06431},
  file          = {:http\://arxiv.org/pdf/2207.06431v2:PDF},
  keywords      = {Quantum Physics (quant-ph), FOS: Physical sciences},
  primaryclass  = {quant-ph},
  publisher     = {Springer Science and Business Media LLC},
}

@online{MGCM25ibm,
  author       = {Ryan Mandelbaum and Jay Gambetta and Jerry Chow and Tushar Mittal and Theodore J. Yoder and Andrew Cross and Matthias Steffen},
  title        = {How IBM Will Build the World's First Large-Scale, Fault-Tolerant Quantum Computer},
  year         = {2025},
  month        = {June},
  day          = {10},
  url          = {https://www.ibm.com/quantum/blog/large-scale-ftqc},
  note         = {IBM Quantum Blog}
}

@Article{LHD24quantum,
  author        = {Lami, Guglielmo and Haug, Tobias and De Nardis, Jacopo},
  title         = {Quantum State Designs with Clifford Enhanced Matrix Product States},
  year          = {2024},
  month         = apr,
  abstract      = {Nonstabilizerness, or `magic', is a critical quantum resource that, together with entanglement, characterizes the non-classical complexity of quantum states. Here, we address the problem of quantifying the average nonstabilizerness of random Matrix Product States (RMPS). RMPS represent a generalization of random product states featuring bounded entanglement that scales logarithmically with the bond dimension $\chi$. We demonstrate that the $2$-Stabilizer R\'enyi Entropy converges to that of Haar random states as $N/\chi^2$, where $N$ is the system size. This indicates that MPS with a modest bond dimension are as magical as generic states. Subsequently, we introduce the ensemble of Clifford enhanced Matrix Product States ($\mathcal{C}$MPS), built by the action of Clifford unitaries on RMPS. Leveraging our previous result, we show that $\mathcal{C}$MPS can approximate $4$-spherical designs with arbitrary accuracy. Specifically, for a constant $N$, $\mathcal{C}$MPS become close to $4$-designs with a scaling as $\chi^{-2}$. Our findings indicate that combining Clifford unitaries with polynomially complex tensor network states can generate highly non-trivial quantum states.},
  archiveprefix = {arXiv},
  copyright     = {arXiv.org perpetual, non-exclusive license},
  doi           = {10.48550/ARXIV.2404.18751},
  eprint        = {2404.18751},
  file          = {:LHD24quantum - Quantum State Designs with Clifford Enhanced Matrix Product States.pdf:PDF:http\://arxiv.org/pdf/2404.18751v2},
  keywords      = {Quantum Physics (quant-ph), FOS: Physical sciences},
  primaryclass  = {quant-ph},
  publisher     = {arXiv},
}

@Book{NC00quantum,
  author    = {Nielsen, Michael A. and Chuang, Isaac L.},
  publisher = {Cambridge University Press},
  title     = {Quantum Computation and Quantum Information: 10th Anniversary Edition},
  year      = {2000},
  doi       = {10.1017/CBO9780511976667},
  source    = {Luka Music, 2021-09-16},
}

@Article{FK17unconditionally,
  author    = {Fitzsimons, Joseph F. and Kashefi, Elham},
  journal   = {Phys. Rev. A},
  title     = {Unconditionally verifiable blind quantum computation},
  year      = {2017},
  month     = {Jul},
  pages     = {012303},
  volume    = {96},
  doi       = {10.1103/PhysRevA.96.012303},
  file      = {:FK17_unconditionally - Unconditionally Verifiable Blind Quantum Computation.pdf:PDF;:FK17_unconditionally - Unconditionally Verifiable Blind Quantum Computation.pdf:PDF;:FK17_unconditionally - Unconditionally Verifiable Blind Quantum Computation.html:URL},
  issue     = {1},
  numpages  = {27},
  publisher = {American Physical Society},
  source    = {Luka Music, 2021-09-16},
  url       = {https://link.aps.org/doi/10.1103/PhysRevA.96.012303},
}

@Article{KKLM22unifyinga,
  author        = {Kapourniotis, Theodoros and Kashefi, Elham and Leichtle, Dominik and Music, Luka and Ollivier, Harold},
  journal       = {Quantum Science and Technology},
  title         = {Unifying Quantum Verification and Error-Detection: Theory and Tools for Optimisations},
  year          = {2022},
  issn          = {2058-9565},
  month         = may,
  number        = {3},
  pages         = {035036},
  volume        = {9},
  abstract      = {With the advent of cloud-based quantum computing, it has become vital to provide strong guarantees that computations delegated by clients to quantum service providers have been executed faithfully. Secure - blind and verifiable - Delegated Quantum Computing (SDQC) has emerged as one of the key approaches to address this challenge, yet current protocols lack at least one of the following three ingredients: composability, noise-robustness and modularity. To tackle this question, our paper lays out the fundamental structure of SDQC protocols, namely mixing two components: the computation which the client would like the server to perform and tests that are designed to detect a server's malicious behaviour. Using this abstraction, our main technical result is a set of sufficient conditions on these components which imply the security and noise-robustness of generic SDQC protocols in the composable Abstract Cryptography framework. This is done by establishing a correspondence between these security properties and the error-detection capabilities of the test computations. Changing the types of tests and how they are mixed with the client's computation automatically yields new SDQC protocols with different security and noise-robustness capabilities. This approach thereby provides the desired modularity as our sufficient conditions on test computations simplify the steps required to prove the security of the protocols and allows to focus on the design and optimisation of test rounds to specific situations. We showcase this by systematising the search for improved SDQC protocols for Bounded-error Quantum Polynomial-time computations. The resulting protocols do not require more hardware on the server's side than what is necessary to blindly delegate the computation without verification, and they outperform all previously known results.},
  archiveprefix = {arXiv},
  copyright     = {arXiv.org perpetual, non-exclusive license},
  date          = {2022-06-01},
  doi           = {10.1088/2058-9565/ad466d},
  eprint        = {2206.00631},
  file          = {:KKLM22unifyinga - Unifying Quantum Verification and Error Detection_ Theory and Tools for Optimisations.pdf:PDF:http\://arxiv.org/pdf/2206.00631v4},
  keywords      = {Quantum Physics (quant-ph), FOS: Physical sciences},
  primaryclass  = {quant-ph},
  publisher     = {IOP Publishing},
}

@article{LOLH24Learning,
   title={Learning efficient decoders for quasichaotic quantum scramblers},
   volume={109},
   ISSN={2469-9934},
   url={http://dx.doi.org/10.1103/PhysRevA.109.022429},
   DOI={10.1103/physreva.109.022429},
   number={2},
   journal={Physical Review A},
   publisher={American Physical Society (APS)},
   author={Leone, Lorenzo and Oliviero, Salvatore F. E. and Lloyd, Seth and Hamma, Alioscia},
   year={2024},
   month=feb }

@article{LOH24Learning,
   title={Learning t-doped stabilizer states},
   volume={8},
   ISSN={2521-327X},
   url={http://dx.doi.org/10.22331/q-2024-05-27-1361},
   DOI={10.22331/q-2024-05-27-1361},
   journal={Quantum},
   publisher={Verein zur Forderung des Open Access Publizierens in den Quantenwissenschaften},
   author={Leone, Lorenzo and Oliviero, Salvatore F. E. and Hamma, Alioscia},
   year={2024},
   month=may, pages={1361} }

@misc{LZH22significance,
      title={Significance improvement by randomized test in random sampling without replacement}, 
      author={Zihao Li and Huangjun Zhu and Masahito Hayashi},
      year={2022},
      eprint={2211.02399},
      archivePrefix={arXiv},
      primaryClass={math.ST},
      url={https://arxiv.org/abs/2211.02399}, 
}

@article{FFS26,
      title={Directly estimating the fidelity of measurement-based quantum computation}, 
      author={David T. Stephen and Michael Foss-Feig},
      year={2026},
      eprint={2603.13753},
      archivePrefix={arXiv},
      primaryClass={quant-ph},
      url={https://arxiv.org/abs/2603.13753}, 
      journal={arXiv:2603.13753}
}

\appendix

\section{Technical preliminaries}
\label{sec:prelim}
In this work, we consider systems composed of qubits whose state is a normalized vector in a two-dimensional Hilbert space $\Hilbert = \mathbb C ^2$, with composite systems described by tensor products of such spaces. For a more general introduction to quantum information science, we refer the reader to \cite{NC00quantum}.

\subsection{Fidelity and trace distance}
\label{subsection:prelims:Fidelity}

To quantify the closeness between two quantum states $\rho$ and $\sigma$, two fundamental quantities are widely used: \emph{fidelity} and \emph{trace distance}.

The fidelity is defined as
\[
F(\rho, \sigma) = \left( \Tr \sqrt{\sqrt{\rho}\,\sigma\,\sqrt{\rho}} \right)^2 .
\]
It measures the overlap between $\rho$ and $\sigma$, and satisfies $0 \leq F(\rho, \sigma) \leq 1$, with equality $F(\rho, \sigma) = 1$ if and only if $\rho = \sigma$. When one of the two states is pure, say $\sigma = \ketbra{\psi}$, then we write with an abuse of notation $F(\rho, \ket\psi)$ and it simplifies to
\begin{equation}
\label{fidover}
F(\rho, \ket{\psi}) = \Tr[\rho\ketbra\psi]= \bra{\psi}\rho\ket{\psi}.
\end{equation}

In the context of the Pauli basis $\mathcal{P}_n$, it is important to note that the operators are orthogonal with respect to the Hilbert-Schmidt inner product, such that $\Tr(PQ) = d \delta_{PQ}$ for $P, Q \in \mathcal{P}_n$. Consequently, any operator $A$ (and thus any density matrix) is expanded as $A = \frac{1}{d} \sum_P \Tr(AP) P$. This explains the $1/d$ factor appearing in the Pauli expansion of the fidelity:
\begin{align}
F(\rho, \ket{\psi}) &= \Tr\left[ \rho \left( \frac{1}{d} \sum_{P \in \mathcal{P}_n} \chi_\psi(P) P \right) \right] 
\\
&= \frac{1}{d} \sum_{P \in \mathcal{P}_n} \chi_\psi(P) \Tr[\rho P].
\end{align}

The trace distance is defined as 
\begin{equation}    
D(\rho, \sigma) = \frac{1}{2}\|\rho - \sigma\|_1 = \frac{1}{2}\Tr|\rho - \sigma|,
\end{equation}
and constitutes a metric on the space of operators. It also admits a direct operational interpretation: it characterizes the largest possible difference in outcome probabilities that any measurement can produce between the two states,
\begin{equation}
D(\rho, \sigma) = \max_{0 \le E \le \mathbb I}
\abs{\Tr[E(\rho - \sigma)]}
\end{equation}
Here the maximization is taken over all measurement effects $E$ such that $0 \le E \le \mathbb I$, corresponding to the "yes" outcome of a two-outcome POVM $\{E, \mathbb I - E\}$. Hence, the trace distance quantifies the maximal statistical distinguishability between $\rho$ and $\sigma$: it is the greatest bias achievable when attempting to tell the two states apart using any physically allowed measurement.

The two quantities are closely related through the \emph{Fuchs--van de Graaf inequalities} \cite{FG97cryptographic}:
\begin{equation}
\label{eq:prelim:fuchsvandegraaf}
1 - \sqrt{F(\rho, \sigma)} \leq D(\rho, \sigma) \leq \sqrt{1 - F(\rho, \sigma)}.
\end{equation}
These inequalities provide a quantitative bridge between the geometric notion of overlap and the operational notion of distinguishability. In particular, if the fidelity between $\rho$ and $\sigma$ is close to one, then the trace distance is necessarily small, ensuring that the two states are practically indistinguishable. 

\subsection{Concentration bounds}
\label{subsection:prelim:concentration}
A central task in quantum characterization and verification is the estimation of the fidelity between a prepared quantum state $\rho$ and a known pure target state $\ket{\Psi}$. Though the fidelity is rigorously defined mathematically, in practice one wishes to estimate empirically it by measuring observables and analyze the outcomes. Among the crucial tools to evaluate the accuracy of these methods, we find concentration bounds for empirical averages such as Hoeffding's inequality, presented in Lemma \ref{lemma:prelim:hoeffding}.
\begin{lemma}[Hoeffding's inequality]
\label{lemma:prelim:hoeffding}
    Let $\epsilon>0, N>0$, and
    $x_1, \ldots, x_N$ be i.i.d random variables sampled from a probability distribution $\dist$. On this distribution, let $g:\mathbb R \rightarrow \mathbb R$ be a bounded function, and define $G = \sup_{x\in\mathrm{supp}(\dist)}g(x)-\inf_{x\in\mathrm{supp}(\dist)}g(x)$. Then, the probability that the empirical average $\bar X = \frac 1 N \sum_{i=1}^N g(x_{i})$ exceeds the expected value $X=\mathbb E_{x\sim\dist}[g(x)]$ by more than $\epsilon$ decreases exponentially as
    \begin{equation}
        \Pr\left[
        \bar X
        -
        X
        \geq \epsilon
        \right]
        \leq
        \exp{
        \left(
        -2\dfrac{N\epsilon^2}{G^2}
        \right)
        }\; .
    \end{equation}
    The same bound holds for the lower tail $\Pr[X - \bar X \geq \epsilon]$.
\end{lemma}

\section{$\ell_1$-norm of CPS states}
\label{appendix:l-1 norm of CPS}

An $n$-qubit state $\ket{\Psi} \in \mCEPS$ is of the form $C\left(\bigotimes_{i=1}^n\ket{\psi_i}\right)$. We analyze the $\ell_1$-norm of the characteristic function $\chi_{\Psi}$ as defined in Section \ref{subsection:prelims:DFE}:
\begin{equation}
    \norm{\chi_\Psi}_{1} = \sum_{P \in \mathcal{P}_n} \abs{\Tr[P\rho]}
\end{equation}
where $\rho = \ketbra\Psi$. By the cyclicity of the trace and the fact that Clifford circuits $C$ permute Pauli operators (mapping $P$ to $\tilde{P} = C^\dagger P C \in \mathcal{P}_n$), we have:
\begin{align}
    \norm{\chi_\Psi}_{1} 
    &= \sum_{P \in \mathcal{P}_n} \abs{ \Tr\left[ C^\dagger P C \left( \bigotimes_{i=1}^n \ketbra{\psi_i} \right) \right] } \nonumber \\
    &= \sum_{\tilde{P} \in \mathcal{P}_n} \abs{ \Tr\left[ \left( \tilde P_1 \otimes \cdots \otimes \tilde P_n \right) \left( \bigotimes_{i=1}^n \rho_i \right) \right] } \nonumber \\
    &= \sum_{\tilde{P} \in \mathcal{P}_n} \prod_{i=1}^n \abs{ \Tr[ \tilde P_i \rho_i ] }.
\end{align}
Because the sum over $\mathcal{P}_n$ is a sum over all possible combinations of single-qubit Paulis $\{\mathbb{I}, X, Y, Z\}$, the product and the sum commute:
\begin{equation}
    \norm{\chi_\Psi}_{1} = \prod_{i=1}^n \left( \sum_{P \in \{ \mathbb{I}, X, Y, Z \}} |\Tr(P \rho_i)| \right) = \prod_{i=1}^n \norm{\chi_{\psi_i}}_1.
\end{equation}
For a single qubit with Bloch vector $\vec r$, we have $\norm{\chi_{\psi_i}}_1 = 1 + \norm{\vec r}_1 \le 1 + \sqrt{3}\,\norm{\vec r}_2 \le 1+\sqrt{3}$ by Cauchy--Schwarz, with equality for $\vec r = (1,1,1)/\sqrt{3}$. Hence, in the worst case,
\begin{equation}
    \norm{\chi_\Psi}_{1} \le (1 + \sqrt{3})^n.
\end{equation}
A natural magic-state input already exhibits such exponential growth. As a main illustrative example, consider each $\rho_i$ to be a magic $T$-state $\rho_T$. For the $T$-state, the Pauli expectations are $|\Tr(\mathbb{I}\rho_T)|=1$, $|\Tr(X\rho_T)|=1/\sqrt{2}$, $|\Tr(Y\rho_T)|=1/\sqrt{2}$, and $|\Tr(Z\rho_T)|=0$. Thus, for a single qubit:
\begin{equation}
    \norm{\chi_{\rho_T}}_1 = 1 + \frac{1}{\sqrt{2}} + \frac{1}{\sqrt{2}} + 0 = 1 + \sqrt{2}.
\end{equation}
Substituting this back into the product for $n$ qubits, we obtain:
\begin{equation}
    \norm{\chi_\Psi}_{1} = (1 + \sqrt{2})^n.
\end{equation}
Using the fact that $d = 2^n$, we can rewrite this $T$-state value as $\norm{\chi_\Psi}_{1} = d \left( \frac{1+\sqrt{2}}{2} \right)^n$, while the worst case reads $\norm{\chi_\Psi}_{1} \le d \left( \frac{1+\sqrt{3}}{2} \right)^n$. This confirms that the $\ell_1$-norm grows exponentially with $n$, and for stabilizer states (where $\norm{\chi_{\psi_i}}_1 = 2$), it correctly recovers $\norm{\chi_\Psi}_1 = 2^n = d$.

\section{State verification in the adversarial scenario}
\label{section:verification}

In many quantum verification tasks, security or correctness analyses rely on the assumption that all copies of the quantum state are produced independently and identically distributed (i.i.d.).
In an adversarial scenario, however, this assumption may fail: the subsystems across different rounds can be arbitrarily correlated.
The following result, adapted directly from~\cite[Theorem~3]{FKMO24learning}, shows that for protocols requiring non-adaptive, single-copy (incoherent) measurements only, the same estimation guarantees obtained in the i.i.d.\ case can still be achieved in the general, non-i.i.d.\ case — at the price of requiring a larger number of total samples, which is captured by Lemma \ref{lemma:prelim:iid}.
\begin{lemma}[State Certification in the Non-i.i.d. Setting \cite{FKMO24learning}]
\label{lemma:prelim:iid}
Let $\mathcal{A}$ be a certification protocol for an $n$-qubit target state $\ket{\Psi}$ (where $d=2^n$) that consumes $\Niid$ independent copies and satisfies the following conditions:

\begin{itemize}
    \item 
    $\Pr\left[\mathcal{A}(\rho^{\otimes \Niid}) = \Accept \; \middle|\;  F(\rho, \ket{\Psi}) < 1-\epsilon\right] \leq \delta$
    \item 
    $\Pr\left[\mathcal{A}(\rho^{\otimes \Niid}) = \Reject \; \middle|\;  F(\rho, \ket{\Psi}) \geq 1-\frac{\epsilon}{3n}\right] \leq \delta$ 
\end{itemize}

Now consider a protocol $\mathcal{B}$ acting on an arbitrary (potentially entangled) joint state $\rho^{1\ldots \Nnoniid} \in (\mathcal{H}^{\otimes n})^{\otimes \Nnoniid}$ . Protocol $\mathcal{B}$ proceeds as follows :
\begin{enumerate}
    \item Apply a random permutation to the $\Nnoniid$ subsystems .
    \item Partition the registers into three sets: a test set of size $\Niid$, a single held-out subsystem $\rho_{\mathrm{fin}}$, and a discarded set of size $\Nnoniid - \Niid - 1$ .
    \item Execute $\mathcal{A}$ on the test set and output the resulting $\Accept/\Reject$ flag .
\end{enumerate}

Then, protocol $\mathcal{B}$ inherits the performance guarantees of $\mathcal{A}$ relative to the held-out state $\rho_{\mathrm{fin}}$ :
\begin{align}
    \Pr\left[\mathcal{B}(\rho) = \Accept \; \middle|\;  F(\rho_{\mathrm{fin}}, \ket{\Psi}) < 1-\epsilon\right] &\leq \delta \text{,} \\
    \Pr\left[\mathcal{B}(\rho) = \Reject \; \middle|\;  F(\rho_{\mathrm{fin}}, \ket{\Psi}) \geq 1-\frac{\epsilon}{3n}\right] &\leq \delta \text{,} 
\end{align}
provided the total sample complexity satisfies :
\begin{equation}
\Nnoniid \in \Omega\left( \frac{n}{\delta^2\epsilon^2} \Niid^2 \log^2\left(\frac{\Niid}{\delta}\right) \right) \text{.} 
\end{equation}
\end{lemma}

We have proven the security of Protocol \ref{protocol:certification} under the assumption that the incoming state is of the form $\rho^{\otimes N}$, meaning that the states are produced in an independent and identically distributed way. In an adversarial scenario, this is not necessarily the case. In Protocol \ref{protocol:verification}, we thus increase the number of samples and randomly partition the subsystems before applying Protocol \ref{protocol:certification} on a subset of registers in order to use the results of lemma \ref{lemma:prelim:iid}. The proof of Theorem \ref{theorem:verification security} is thus stated hereby.

        \begin{proof}
            The theorem is easily proved using lemma \ref{lemma:prelim:iid}, as it is a particular instance of state certification in the non-i.i.d.\ setting using non-adaptive and incoherent measurements according to the definitions of \cite{FKMO24learning}. Indeed, the subroutine Protocol \ref{protocol:certification} involves Pauli observables that can be sampled in advance (according to the mentioned probability distribution), therefore non-adaptive. The observables are measured on the individual subsystems separately, therefore the measurements are incoherent. Lemma \ref{lemma:prelim:iid} thus applies and sets the scaling of $\Nnoniid$ for the verification procedure
            in function of $\Niid$ of the certification procedure in order to reach the same success probability $1-\delta$ for the same target precision $\epsilon$. By simply plugging the scaling of $\Niid$ in the formula of $\Nnoniid$, we get Eq.~\eqref{eq:verification:scaling}.
        \end{proof}

\end{document}